\documentclass[a4paper,12pt]{article}
\usepackage[utf8]{inputenc}
\usepackage[english]{babel}

\usepackage{amsmath,amssymb,amsfonts,amsthm}
\usepackage{latexsym,bm,textcomp}

\usepackage{graphicx}
\usepackage{xcolor}
\usepackage{hyperref}
\usepackage{grffile}
\usepackage{bmpsize}
\AtBeginDocument{\DeclareGraphicsExtensions{.pdf,.PDF,.eps,.EPS,.png,.PNG,.tif,.TIF,.jpg,.JPG,.jpeg,.JPEG}}

\DeclareGraphicsExtensions{.pdf,.png,.jpg}
\usepackage{longtable,tabulary}
\usepackage{booktabs,array,multirow}

\usepackage{fullpage}
\usepackage{setspace}
\usepackage{parskip}
\usepackage{float}
\usepackage[section]{placeins}
\usepackage{titlesec}



\usepackage[numbers,sort&compress]{natbib}

\usepackage{authblk}

\theoremstyle{plain}
\theoremstyle{definition} 
\newtheorem{thm}{Theorem}[section]
\newtheorem{lemma}[thm]{Lemma}

\theoremstyle{definition}

\newtheorem{remark}{Remark}

\renewenvironment{abstract}
  {{\bfseries\noindent{\abstractname}\par\nobreak}\footnotesize}
  {\bigskip}

\titlespacing{\section}{0pt}{*3}{*1}
\titlespacing{\subsection}{0pt}{*2}{*0.5}
\titlespacing{\subsubsection}{0pt}{*1.5}{0pt}

\newif\iflatexml\latexmlfalse
\providecommand\citet{\cite}
\providecommand\citep{\cite}


\begin{document}
\title{
Spectrum for a non-unitary one-dimensional two-state quantum walk with one defect}

\author[1]{Takako Endo}%
\author[2]{Yohei Matsumoto}%
\author[3]{Hiromichi Ohno}%
\author[4]{Akito Suzuki}%
\affil[1]{Institute for Materials Research (IMR), Tohoku University, Sendai, Miyagi 980-8577, Japan}
\affil[2]{Department of Science and Technology, Graduate School of Medicine, Science and Technology, Shinshu University, Wakasato,Nagano 380-8553, Japan}
\affil[3]{Department of Mathematics, Faculty of Engineering, Shinshu University, Wakasato,Nagano 380-8553, Japan}%
\affil[4]{Department of Production Systems Engineering and Sciences, Komatsu University, Komatsu, Ishikawa 923-8511, Japan}%

\vspace{-1em}

  \date{\today}

\begingroup
\let\center\flushleft
\let\endcenter\endflushleft
\maketitle
\endgroup

\selectlanguage{english}
\begin{abstract}
Existence of the eigenvalues of the discrete-time quantum walks is deeply related to localization.
Also, for the study of open quantum systems, non-Hermitian systems have attracted much attention. As mathematical models for such systems, non-unitary quantum walks with the chiral symmetry are essential for the study of the topological insulator. In this paper, we give the whole picture of the eigenvalues of a non-unitary one-dimensional two-state quantum walks with one defect and the chiral symmetry.
\end{abstract}

\sloppy

\section{Introduction}
In recent years, the quantum walk , which is regarded as the quantum mechanical analogue of the classical random walk, has attracted considerable attention\cite{AharonovEtAl2001, AmbainisEtAl2001, Kempe2003, Kendon2007, QiangEtAl2024,VAndraca2012}.  

Quantum walks, which exploit quantum superposition and interference effects, exhibit propagation speeds surpassing those of classical diffusion and have been extensively studied from both algorithmic and physical perspectives \cite{Meyer1997}.
They have attained universality as a general computational model, demonstrating that any quantum circuit can be efficiently simulated on a suitably constructed graph \cite{ChildsEtAl2013,ChildsEtAl2009, LovettEtAl2010}.
Building upon this universality, research has advanced toward designing new quantum algorithms that leverage the structural properties of quantum walks \cite{Ambainis2004, Szegedy2004}.
At the same time, quantum walks have been applied to the modeling of physical systems, enabling the simulation of topological phase transitions and interface states, thereby contributing to quantum simulations of topological materials \cite{KitagawaEtAl2012,PanahiyanEtAl2019}. Furthermore, quantum walks form a unified framework for quantum simulation itself, establishing a direct connection between discrete-time dynamics and continuous-space quantum evolution \cite{Yepez2005}.
Moreover, non-unitary quantum walks have been employed for simulating natural phenomena, providing a quantum-mechanical explanation for the high efficiency of excitonic energy transfer in photosynthetic complexes \cite{MohseniEtAl2008}.
Thus, quantum walks continue to evolve as a central concept in quantum information science, encompassing computational theory, algorithmic design, and the simulation of natural processes.

One important application of (discrete-time) quantum walk is Grover's search algorithm~\cite{Grover1996}. Shenvi \textit{et al.}~\cite{ShenviEtAl2003} demonstrated that one of the models of quantum walk is equivalent to Grover’s search algorithm. Moreover, quantum walks have the following two properties, they are considered useful for generalizing Grover's search algorithm. One is that the spread speed of quantum walk is fast. Indeed, while the width of the probability distribution in a classical random walk diverges as $\mathcal{O}(\sqrt{t})$, the width of quantum walks diverges linearly with time $t$~\cite{KnightEtAl2003}. The other is localization. Localization is defined as the phenomenon in which the probability of finding the walker at a certain position does not converge to zero even in the long-time limit.  
It is known that localization occurs when the time-evolution operator of the quantum walk possesses eigenvalues~\cite{InuiEtAl2004}.

Many studies have investigated both the enhancement of search speed and the emergence of localization.  
In the following, we briefly review previous works related to localization and the eigenvalue problem. Konno~\cite{Konno2010} studied a quantum walk model with a perturbation at the origin, referred to as a one defect unitary quantum walk, and proved the occurrence of localization using a path-counting method. W\'ojcik \textit{et al.}~\cite{Antoni2012} introduced a one defect quantum walk model:
\begin{align}
C(x)=
\begin{cases}
\dfrac{1}{\sqrt{2}}
\begin{bmatrix}
1 & 1\\
1 & -1
\end{bmatrix} & (x = \pm1, \pm2, \ldots), \\[10pt]
\dfrac{\omega}{\sqrt{2}}
\begin{bmatrix}
1 & 1\\
1 & -1
\end{bmatrix} & (x = 0),
\end{cases}
\nonumber
\end{align}
where $\omega = e^{2i\pi\phi}$ and $\phi \in (0,1)$.  
This model describes a globally unitary system with a local perturbation $\omega$ at the origin. Their analytical results showed that localization can arise even from such a minimal phase defect.  
They also demonstrated that the existence of localization depends strongly on the initial coin state.  
Following this, Endo \textit{et al.}~\cite{EndoKonno2014} used the generating function method, and derived the candidate eigenvalues of this system. Furthermore, Endo \textit{et al.}~\cite{KonnoEtAl2013} generalized this result to a two-phase quantum walk with one defect and obtained candidate eigenvalues.  
Later, Kiumi \textit{et al.}~\cite{KiumiSaito2021} extended this model to a more general class of two-phase quantum walks with one defect and derived a necessary and sufficient condition for the existence of eigenvalues using the transfer matrix method.  
They also obtained explicit expressions for eigenvalues and eigenvectors in special cases.  
This transfer matrix approach provides a powerful framework for including eigenvalues that could not be accessed by the generating function method.

Recent studies have also investigated non-unitary quantum walks, whose time-evolution operators are no longer unitary.  
For instance, Mochizuki \textit{et al.}~\cite{Mochizuki2016} introduced a model of non-unitary quantum walk and analyzed the $\mathcal{PT}$-symmetry of this non-unitary quantum walk. Furthermore, Asahara et al.~\cite{AsaharaEtAl2021} derived the Witten index and essential spectrum for this model. Additionally, Kiumi et al.~\cite{KiumiEtAl2022} derived the eigenvalues of the two-phase quantum walk, a special case of this model, using the transfer matrix method.

Motivated by these developments, the present work aims to solve completely the eigenvalue problem of a non-unitary quantum walk model that possesses a one defect and chiral symmetry, using the transfer matrix method.  
This model can be regarded as a non-unitary extension of the unitary model studied in~\cite{EndoKonno2014,Antoni2012}.  
In the unitary case, the spectrum lies on the unit circle; however, due to the non-unitarity in our model, the spectrum extends beyond the unit circle, exhibiting a spiderweb-like structure.

The structure of this paper is as follows:  
Section~2 introduces the definition of our quantum walk model.  
Section~3 presents the main results.  
Section~4 defines the transfer matrix and provides the lemmas and proofs needed to establish the main theorem.

\section{Definition of our non-unitary model}

A quantum walk considered in this paper is a one-dimensional two-state quantum walk with one defect.
Let
\begin{eqnarray}
\mathcal{H} =\ell ^2(\mathbb{Z} ;\mathbb{C} ^2)=\Biggl\{ \Psi : \mathbb{Z} \to \mathbb{C} ^2  ~\Bigg{|}~ 
\sum_{x \in \mathbb{Z}} \Arrowvert \Psi (x)\Arrowvert ^2 _{\mathbb{C}^2} <\infty \ \Biggr\} \nonumber
\end{eqnarray}
be the Hilbert space of states. $\mathcal{H}$ is naturally identified with $\ell^2 (\mathbb{Z})\oplus \ell^2 (\mathbb{Z})$ and $\bigoplus _{x \in \mathbb{Z}}\mathbb{C}^2$. For a vector $\Psi \in \mathcal{H}$ and $x \in \mathbb{Z}$, we denote
\begin{eqnarray}
\Psi(x)= \left[	
  \begin{array}{cc}
  \Psi _{L}(x)  \\
  \Psi _{R}(x)  \\ 
  \end{array}
  \right]\ \in \mathbb{C}^2. \nonumber
\end{eqnarray}
The time-evolution operator $U$ of a one-dimensional two-state quantum walk is defined as the product $U=SC$ of two operators $S$ and $C$ on $\mathcal{H}$. Here, $S$ is a shift operator defined by
\begin{eqnarray}
S=\left[	
  \begin{array}{cc}
  L &0 \\
   0 & L^*\\ 
  \end{array}
  \right] \nonumber
  \end{eqnarray}
on $\mathcal{H}$ identified with $\ell^2(\mathbb{Z}) \oplus \ell^2(\mathbb{Z})$, where $L$ is the shift operator on $\ell^2(\mathbb{Z})$ defined by $(Lf)(x)=f(x+1)$ $(x\in \mathbb{Z})$ for $f\in \ell^2(\mathbb{Z})$. The operator $C$ is a coin operator defined by 
\begin{eqnarray}
C= \bigoplus_{x \in \mathbb{Z}}C(x)\nonumber
  \end{eqnarray}
on $\mathcal{H}$ identified with $\bigoplus _{x \in \mathbb{Z}}\mathbb{C}^2$, where $\{C(x)\}_{x\in \mathbb{Z}}$ is a family of two by two matrices. In this paper, we take $C(x)$ as
\begin{eqnarray}
C(x)=\left\{ \begin{array}{ll}
\dfrac{1}{\sqrt{2}}\begin{bmatrix} 1&-1\\ 1& 1\end{bmatrix}& (x=\pm1,\pm2,\cdots), \\\\
\dfrac{\omega}{\sqrt{2}}\begin{bmatrix} 1&-1\\ 1& 1\end{bmatrix}& (x=0) \\
\end{array} \right. \nonumber
\end{eqnarray}
with a parameter $\omega \in \mathbb{R}\backslash\{0\}$, and we denote the time-evolution operator $U$ as 
\begin{eqnarray}
U=U_\omega  \quad (\omega \in \mathbb{R}\backslash\{0\}) \nonumber
\end{eqnarray}
to emphasize the dependence of the parameter $\omega$. When $\omega =1$, we write $C=C_1$. Since
\begin{eqnarray}
C_1(x)=\begin{array}{ll}
\dfrac{1}{\sqrt{2}}\begin{bmatrix}
1&-1\\ 
1& 1
\end{bmatrix} 
\end{array} \nonumber
\end{eqnarray}
for all $x \in {\mathbb Z}$, 
the time-evolution operator $U_1$ is homogeneous. When $\omega \neq 1$, the coin operator 
\begin{eqnarray}
C(x)=\begin{array}{ll}
\dfrac{1}{\sqrt{2}}\begin{bmatrix}
1&-1\\ 
1& 1
\end{bmatrix}
\end{array} 
+(\omega -1)\delta_{0}(x)
\begin{array}{ll}
\dfrac{1}{\sqrt{2}}\begin{bmatrix}
1&-1\\ 
1& 1
\end{bmatrix}
\end{array} \nonumber
\end{eqnarray}
has a one defect only at the origin, where $\delta_{0}(0)=1$ and $\delta_{0}(x)=0 ~ (x\neq 0)$. Here, $|\omega -1|$ represents the strength of perturbation from the homogeneous time-evolution operator $U_1$. Moreover, when $\omega \neq \pm 1$, $C(0)$ is not unitary and $U_{\omega}$ becomes a non-unitary time-evolution operator. 
Thus, our evolution $U_\omega$ defines a non-unitary one defect quantum-walk model.

\section{Main results}

Let
\begin{align}
R_{\pm}(\omega) =\sqrt{\frac{\pm \omega(\omega-1)^2+\sqrt{\omega^2 ((\omega -1)^4 +(\omega^2-\omega +1)^2 )}}{4(\omega -\frac{1}{2})^2+1}} \label{202}
\end{align}
be functions defined for $\omega \in \mathbb{R}$. In the following, for a complex number $z\in \mathbb{C} \setminus \{0\}$ expressed in the polar form as $z=re^{i\theta}$ with $-\pi < \theta \leq \pi$, the square root $\sqrt{z}$ is defined by
\begin{align*}
    \sqrt{z} = z^{\frac{1}{2}}=r^{\frac{1}{2}}e^{\frac{i\theta}{2}}.
\end{align*}
Here, since $\omega^2 ((\omega -1)^4 +(\omega^2-\omega +1)^2) \geq 0$ and $\pm \omega(\omega-1)^2+\sqrt{\omega^2 ((\omega -1)^4 +(\omega^2-\omega +1)^2 )} > 0$, it follows that $R_{\pm}(\omega)> 0$.

The following is the main result of this paper.

\begin{thm} \label{thm31}
Suppose $\omega \in \mathbb{R}\backslash \{0,1\}$. 
\begin{itemize}
\item[(i)] $U_\omega$ has four eigenvalues which are expressed as
\begin{align}
\lambda_{1}=R_{-}(\omega)+iR_{+}(\omega),\quad
  \lambda_{2}=-\bar{\lambda_{1}}=-R_{-}(\omega)+iR_{+}(\omega),\nonumber \\
  \lambda_{3}=-\lambda_{1}=-R_{-}(\omega)-iR_{+}(\omega),\quad \lambda_{4}=\bar{\lambda_{1}}=R_{-}(\omega)-iR_{+}(\omega) . \nonumber
\end{align}
\item[(ii)]
When $\lambda = R_-(\omega) \pm i \mathrm{sgn}(\omega)R_+(\omega)$, the corresponding eigenvector $\Psi (x)$ is
\begin{eqnarray}
\Psi(x)=
\left\{\begin{array}{ll}
\begin{bmatrix}
\mp i z_{-}^{x}\Bigl(\frac{-\lambda +\lambda^{-1}+\sqrt{\lambda^2+\lambda^{-2}}}{\sqrt{2}} \Bigr) \\
\pm i z_{-}^{x-1}
\end{bmatrix} & (x\geq 1 ), \\\\
\begin{bmatrix}
    \mp i\Bigl(\frac{-\lambda +\lambda^{-1}+\sqrt{\lambda^2+\lambda^{-2}}}{\sqrt{2}} \Bigr) \\
    \frac{-\lambda +\lambda^{-1}+\sqrt{\lambda^2+\lambda^{-2}}}{\sqrt{2}}
\end{bmatrix} & (x=0), \\\\
\begin{bmatrix}
z_{+}^{-|x+1|} \\
z_{+}^{-|x|}\Bigl(\frac{-\lambda +\lambda^{-1}+\sqrt{\lambda^2+\lambda^{-2}}}{\sqrt{2}} \Bigr)
\end{bmatrix} & (x\leq -1)
\end{array}\right. \nonumber
\end{eqnarray}
up to a scalar multiple, where $z_{+} = \frac{\lambda +\lambda ^{-1}+\sqrt{\lambda^2 + \lambda^{-2}}}{\sqrt{2}}$ and $z_{-}=\frac{\lambda +\lambda ^{-1}-\sqrt{\lambda^2 + \lambda^{-2}}}{\sqrt{2}}$.
When $\lambda = -R_-(\omega) \pm i \mathrm{sgn}(\omega)R_+(\omega)$, the corresponding eigenvector $\Psi (x)$ is 
\begin{eqnarray}
\Psi(x)=
\left\{\begin{array}{ll}
\begin{bmatrix}
\mp i z_{+}^{x} \Bigl(\frac{-\lambda +\lambda^{-1}-\sqrt{\lambda^2+\lambda^{-2}}}{\sqrt{2}} \Bigr) \\
\pm i z_{+}^{x-1}
\end{bmatrix} & (x\geq 1 ), \\\\
\begin{bmatrix}
    \mp i \Bigl(\frac{-\lambda +\lambda^{-1}-\sqrt{\lambda^2+\lambda^{-2}}}{\sqrt{2}} \Bigr) \\
    \frac{-\lambda +\lambda^{-1}-\sqrt{\lambda^2+\lambda^{-2}}}{\sqrt{2}}
\end{bmatrix} & (x=0) \\\\
\begin{bmatrix}
z_{-}^{-|x+1|} \\
z_{-}^{-|x|}\Bigl(\frac{-\lambda +\lambda^{-1}-\sqrt{\lambda^2+\lambda^{-2}}}{\sqrt{2}} \Bigr)
\end{bmatrix} & (x\leq -1)
\end{array}\right. \nonumber
\end{eqnarray}
up to a scalar multiple.
In particular, $ {\rm dim ker} (U_{\omega}-\lambda_{j})=1 \quad (j=1,2,3,4). $
\end{itemize}
\end{thm}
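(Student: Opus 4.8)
The plan is to solve the eigenvalue equation $U_\omega\Psi=\lambda\Psi$ directly by the transfer-matrix method announced for Section~4. First I would write the equation componentwise. With the shift acting as $(S\Phi)_L(x)=\Phi_L(x+1)$ and $(S\Phi)_R(x)=\Phi_R(x-1)$, and $(C\Phi)(x)=C(x)\Phi(x)$, the equation becomes the pair of scalar relations
\begin{align}
\tfrac{c_{x+1}}{\sqrt2}\bigl(\Psi_L(x+1)-\Psi_R(x+1)\bigr) &= \lambda\Psi_L(x), \nonumber \\
\tfrac{c_{x}}{\sqrt2}\bigl(\Psi_L(x)+\Psi_R(x)\bigr) &= \lambda\Psi_R(x+1), \nonumber
\end{align}
where $c_0=\omega$ and $c_x=1$ for $x\neq0$. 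Solving these for $\Psi(x+1)$ in terms of $\Psi(x)$ yields a transfer matrix $T(\lambda)$ that is \emph{constant} on each homogeneous region $x\geq1$ and $x\leq-1$. A short computation gives $\det T(\lambda)=1$ and $\operatorname{tr}T(\lambda)=\sqrt2(\lambda+\lambda^{-1})$, so the eigenvalues of $T(\lambda)$ are precisely the $z_\pm=\tfrac{1}{\sqrt2}\bigl(\lambda+\lambda^{-1}\pm\sqrt{\lambda^2+\lambda^{-2}}\bigr)$ from the statement, with $z_+z_-=1$.

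Next, the $\ell^2$ condition selects on each side of the defect the unique decaying mode of $T(\lambda)$: on $x\geq1$ the solution must be proportional to $z_-^{x}$ (requiring $|z_-|<1$) and on $x\leq-1$ proportional to $z_+^{-|x|}$ (requiring $|z_+|>1$). This is where the explicit eigenvectors of $T(\lambda)$ enter, fixing each half-line solution up to one scalar; the resulting vectors are exactly the $x\geq1$ and $x\leq-1$ expressions in part~(ii), with the component ratio supplying the factor $\tfrac{1}{\sqrt2}(-\lambda+\lambda^{-1}+\sqrt{\lambda^2+\lambda^{-2}})$.

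Then I would impose the matching at the origin. Taking the two scalar relations at $x=-1,0$, where the defect value $c_0=\omega$ appears, glues the left and right solutions through $\Psi(0)$; eliminating the two free scalars leaves a single homogeneous linear system whose solvability is a scalar consistency equation $F(\lambda,\omega)=0$. Carrying out this elimination and simplifying with $z_+z_-=1$ should reduce $F$ to an algebraic equation in $\lambda^2$ (equivalently in $\lambda^2+\lambda^{-2}$). Solving it and matching real and imaginary parts against the closed forms for $R_+^2\pm R_-^2$ and $R_+R_-$ read off from \eqref{202} produces the four roots $\pm R_-(\omega)\pm i\operatorname{sgn}(\omega)R_+(\omega)$.

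Finally, admissibility and multiplicity. I would verify that each of these four roots genuinely satisfies $|z_-(\lambda)|<1$, so that the matched solution lies in $\ell^2$, and that no other root of $F$ does; the four can be organized through the $\lambda\mapsto-\lambda$ and $\lambda\mapsto\bar\lambda$ symmetries of the model, so it suffices to treat $\lambda_1$ and propagate. Since the decaying mode on each half-line is one-dimensional and the matching fixes the relative amplitude uniquely, $\dim\ker(U_\omega-\lambda_j)=1$ follows immediately. The main obstacle I anticipate is twofold: the algebraic reduction of the origin matching condition in the non-unitary regime, where $C(0)$ is not unitary and the usual unitary simplifications are unavailable; and the verification of $|z_-|<1$ for eigenvalues lying off the unit circle, which requires careful control of $\sqrt{\lambda^2+\lambda^{-2}}$ and its branch to confirm exactly four admissible eigenvalues and to exclude spurious roots of $F$.
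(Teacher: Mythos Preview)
Your outline is essentially the paper's transfer-matrix proof, and the broad strokes are correct. Two remarks.

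First, a cosmetic difference: the paper works in the shifted coordinates $(J\Psi)(x)=\bigl(\Psi_L(x-1),\Psi_R(x)\bigr)^{\mathsf T}$, which makes the transfer matrix at site $x$ depend only on $c_x$ rather than on both $c_x$ and $c_{x+1}$. The defect then contributes a \emph{single} anomalous matrix $T_\lambda(0)$, and the matching condition becomes the clean statement ``$T_\lambda(0)\chi_+$ is parallel to $\chi_-$'' (or the same with $\pm$ swapped). Your direct transfer matrix works too but produces two anomalous steps, at $x=-1$ and at $x=0$, so the origin matching is a bit heavier.

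Second, and this is a genuine gap in what you wrote: your assertion that ``each of these four roots genuinely satisfies $|z_-(\lambda)|<1$'' is false, and this is precisely why the theorem states \emph{two} distinct eigenvector formulas. With the principal square root one has $|z_+|>1$ only on a region $\Xi_+$ (roughly the open right half-plane minus $\Sigma$) and $|z_+|<1$ on the mirror region $\Xi_-$; this trichotomy is the content of the paper's Lemma~4.3. Since $R_-(\omega)>0$, the eigenvalues $R_-(\omega)\pm i\,\mathrm{sgn}(\omega)R_+(\omega)$ lie in $\Xi_+$, while $-R_-(\omega)\pm i\,\mathrm{sgn}(\omega)R_+(\omega)$ lie in $\Xi_-$, where the decaying mode on $x\ge1$ is $z_+^{\,x}$, not $z_-^{\,x}$. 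Your symmetry $\lambda\mapsto-\lambda$ is indeed present, but it swaps $z_+\leftrightarrow-z_-$, hence it sends the condition $|z_-|<1$ to $|z_+|<1$ and interchanges the two eigenvector templates; you cannot simply propagate the single $\Xi_+$ formula to all four roots. The paper handles this by treating $\Xi_+$ and $\Xi_-$ separately (Lemmas~4.4, 4.5, 4.7) and then checking, via a sign computation on $\Re\bigl((-1\pm i)\lambda/\omega+\lambda-\lambda^{-1}\bigr)$, that squaring the matching equation introduced no spurious roots --- which is exactly the branch-control obstacle you anticipated.
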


\section{Proof of Main results}

In this section, we introduce the transfer matrix for our $U_\omega$, the main tool of our study. We note that $0$ is not an eigenvalue of $U_{\omega}$, since the inverse $U_{\omega}^{-1}=C^{-1}S^{-1}$ exists and is bounded for any $\omega \in \mathbb{R}\setminus \{ 0 \}$.\\

 Put $T_{\lambda \infty}=\begin{bmatrix}
    \sqrt{2}\lambda & 1\\
    1 & \sqrt{2}\lambda^{-1}
\end{bmatrix} \in SL(2,\mathbb{C})=\{M \in M(2,\mathbb{C})|\det M =1 \}$ for $\lambda \in \mathbb{C}\setminus \{ 0\}$and define the transfer matrix of $U_\omega$ as 
\begin{eqnarray}
T_{\lambda}(x)=
\left\{\begin{array}{ll}
T_{\lambda \infty} & (x= \pm 1, \pm 2, \cdots ), \\\\
\begin{bmatrix}
\frac{\sqrt{2}\lambda}{\omega} & 1 \\ 
1 & \frac{\sqrt{2}\omega}{\lambda}
\end{bmatrix} & (x=0). 
\end{array}\right. \nonumber
\end{eqnarray}
Then $x \mapsto T_{\lambda}(x)$ is the map from $\mathbb{Z}$ to $SL(2,\mathbb{C})$.

To solve the eigenvalue problem $U_\omega \Psi =\lambda \Psi$, we transform the vector $\begin{bmatrix}
    \Psi _{L}(x) \\
    \Psi _{R}(x)
\end{bmatrix}$
into $\begin{bmatrix}
    \Psi _{L}(x-1) \\
    \Psi _{R}(x)
\end{bmatrix}$
by the unitary operator 
    $J=L^{*}\oplus I_{\ell^2(\mathbb{Z})}$
    on $\: \mathcal{H} \simeq \ell^2 (\mathbb{Z})\oplus \ell^2 (\mathbb{Z})$
and rewrite the eigenvalue problem in term of the transfer matrix.

\begin{lemma}\label{lemma31}
Let $\lambda \in \mathbb{C}\setminus\{0 \}$ and $\Psi \in \mathcal{H}$. Then the following are equivalent.
\begin{itemize}
\item[(i)] $\Psi \in {\rm ker}(U_{\omega} -\lambda)$.
\item[(ii)] $\Psi$ satisfies
\begin{align}
    (L\oplus L)J\Psi =T_{\lambda}J\Psi \label{31},
\end{align} where $T_{\lambda}:\mathcal{H} \rightarrow \mathcal{H}$ is defined as $(T_{\lambda}\Phi)(x)=T_{\lambda}(x)\Phi(x)$ for $\Phi \in \mathcal{H}$.
\end{itemize}
\begin{proof}
We first prove that ($\mathrm{i})$ implies ($\mathrm{ii}$). Let $(U_{\omega}-\lambda)\Psi=0$, then

\begin{eqnarray}
\label{33}
\lambda \Psi_{L}(x-1) = \frac{\omega_{x}}{\sqrt{2}}(\Psi_{L}(x)-\Psi_{R}(x)),\\
\label{34}
\lambda \Psi_{R}(x+1) = \frac{\omega_{x}}{\sqrt{2}}(\Psi_{L}(x)+\Psi_{R}(x)),
\end{eqnarray}

for any $x\in \mathbb{Z}$, where $\omega_x=\omega $ when $x=0$, and $\omega_x=1$ when $x\neq0$. Subtracting Eq.(\ref{33}) from Eq.(\ref{34}), we have the equation 
\begin{eqnarray}
\label{35}
\Psi_{R}(x+1)= \Psi_{L}(x-1)+\frac{\sqrt{2}}{\lambda}\omega_{x}\Psi_{R}(x).
\end{eqnarray}
Hence, from Eq.(\ref{33}) and Eq.(\ref{35}), we obtain
\begin{align*}
((L \oplus L)J\Psi)(x) =&
\begin{bmatrix}
    \Psi_{L}(x) \\
    \Psi_{R}(x+1)
\end{bmatrix}\\
=&
\begin{bmatrix}
\frac{\sqrt{2}\lambda}{\omega_{x}}\Psi_{L}(x-1)+\Psi_{R}(x) \\
\Psi_{L}(x-1)+\frac{\sqrt{2}\omega_{x}}{\lambda}\Psi_{R}(x)
\end{bmatrix}\nonumber\\
=&\begin{bmatrix}
\frac{\sqrt{2}\lambda}{\omega_{x}} & 1 \\
1 & \frac{\sqrt{2}\omega_{x}}{\lambda}
\end{bmatrix}
\begin{bmatrix}
\Psi_{L}(x-1) \\
\Psi_{R}(x)
\end{bmatrix}\nonumber\\
=& (T_{\lambda}J\Psi)(x) \nonumber
\end{align*}
which leads to ($\mathrm{ii})$.\\

Next, we prove that (ii) implies (i). Assume $(L\oplus L)J\Psi = T_\lambda J\Psi$, then the equation
\begin{align*}
\begin{bmatrix}
    \Psi_{L}(x) \\
    \Psi_{R}(x+1)
\end{bmatrix}
=
((L \oplus L)J\Psi)(x) 
=
 (T_{\lambda}J\Psi)(x)
 =
\begin{bmatrix}
\frac{\sqrt{2}\lambda}{\omega_{x}}\Psi_{L}(x-1)+\Psi_{R}(x) \\
\Psi_{L}(x-1)+\frac{\sqrt{2}\omega_{x}}{\lambda}\Psi_{R}(x)
\end{bmatrix}
\end{align*}
holds. This equation leads to the Eq.(\ref{33}) and Eq.(\ref{34}).
Consequently, we have $\Psi \in \ker(U_{\omega} - \lambda)$.
\end{proof}
\end{lemma}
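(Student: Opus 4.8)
The plan is to establish the equivalence through a single reversible computation: I unwind the eigenvalue equation $U_\omega\Psi=\lambda\Psi$ into its two scalar components, recombine those components into the stated matrix recurrence, and verify that every step inverts so that both implications follow at once.

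First I would expand $U_\omega=SC$ componentwise. Applying the coin gives $(C\Psi)(x)=\frac{\omega_x}{\sqrt 2}\bigl(\Psi_L(x)-\Psi_R(x),\ \Psi_L(x)+\Psi_R(x)\bigr)^{\mathsf T}$, and then $S=L\oplus L^{*}$ shifts the two entries in opposite directions via $(Lf)(x)=f(x+1)$ and $(L^{*}f)(x)=f(x-1)$. Setting the result equal to $\lambda\Psi$ and re-indexing $x$ yields the coupled pair $\lambda\Psi_L(x-1)=\frac{\omega_x}{\sqrt2}(\Psi_L(x)-\Psi_R(x))$ and $\lambda\Psi_R(x+1)=\frac{\omega_x}{\sqrt2}(\Psi_L(x)+\Psi_R(x))$. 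This part is mechanical; the only care required is tracking the index shifts introduced by $L$, $L^{*}$, and the defect scalar $\omega_x$.

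Next I would carry out the linear recombination that produces the transfer matrix. The first equation, solved for $\Psi_L(x)$, gives $\Psi_L(x)=\frac{\sqrt2\lambda}{\omega_x}\Psi_L(x-1)+\Psi_R(x)$; subtracting the first from the second and dividing by $\lambda$ (allowed since $\lambda\neq0$) gives $\Psi_R(x+1)=\Psi_L(x-1)+\frac{\sqrt2\omega_x}{\lambda}\Psi_R(x)$. Read as a $2\times2$ system these two lines are $(\Psi_L(x),\Psi_R(x+1))^{\mathsf T}=T_\lambda(x)(\Psi_L(x-1),\Psi_R(x))^{\mathsf T}$. I would then match the two vectors to the operators in the statement: from $J=L^{*}\oplus I$ one computes $(J\Psi)(x)=(\Psi_L(x-1),\Psi_R(x))^{\mathsf T}$, and applying $L\oplus L$ advances each entry to give $((L\oplus L)J\Psi)(x)=(\Psi_L(x),\Psi_R(x+1))^{\mathsf T}$, which is exactly $(T_\lambda J\Psi)(x)$. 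This proves (i)$\Rightarrow$(ii).

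The converse, and the upgrade from the sitewise identity to the claimed identity in $\mathcal H$, rests on reversibility. The recombination sending the pair (first equation, second equation) to the pair (first matrix row, second matrix row) is an invertible linear map on the equations—the original second equation is recovered as (second row)$+$(first row rearranged)—and each $T_\lambda(x)\in SL(2,\mathbb C)$ is invertible by construction. Hence reading off the two scalar lines of the matrix identity reproduces the original pair, which reassembles into $U_\omega\Psi=\lambda\Psi$; and because $J$ and $L\oplus L$ are unitary, no information is gained or lost in passing between the two formulations. I expect the genuine obstacle to be precisely this reversibility bookkeeping rather than any analytic estimate: one must confirm that the transfer-matrix form encodes exactly the same data as the eigenvalue equation—neither strictly more nor strictly less—so that the two conditions are truly equivalent and not merely related by implication.
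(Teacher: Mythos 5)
Your proposal is correct and follows essentially the same route as the paper: you derive the two componentwise equations $\lambda\Psi_L(x-1)=\frac{\omega_x}{\sqrt2}(\Psi_L(x)-\Psi_R(x))$ and $\lambda\Psi_R(x+1)=\frac{\omega_x}{\sqrt2}(\Psi_L(x)+\Psi_R(x))$ from $U_\omega\Psi=\lambda\Psi$, recombine them (solving the first for $\Psi_L(x)$ and subtracting to get the second row) into $((L\oplus L)J\Psi)(x)=T_\lambda(x)(J\Psi)(x)$, and recover the original pair for the converse, exactly as the paper's proof does. Your explicit framing of the converse as invertibility of the recombination is a slightly more systematic way of packaging what the paper carries out by direct substitution, but it is the same argument.
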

Here Eq.(\ref{31}) is equivalent to
\begin{align*}
    (J\Psi)(x+1)=T_{\lambda}(x)(J\Psi) (x), \quad x \in \mathbb{Z}.
\end{align*}
Hence, for $\Psi \in \ker (U_{\omega}-\lambda)$, multiplying $T_{\lambda}$ to $J\Psi$ is equivalent to applying the left shift to $J\Psi$. This is why we call $T_{\lambda}$ the transfer matrix. On the other hand, by Lemma 4.1 and the iteration of the above equation, $\Psi \in {\rm ker}(U_{\omega}-\lambda)$ holds if and only if $J\Psi$ satisfies
\begin{eqnarray}
(J\Psi)(x)=
\left\{\begin{array}{ll}
T_{\lambda \infty}^{x-1}T_{\lambda}(0)(J\Psi)(0) & (x\geq 1), \\\\
T_{\lambda \infty}^{x}(J\Psi)(0) & (x\leq -1).
\end{array}\right. \label{0318}
\end{eqnarray} 
To calculate $T_{\lambda \infty}^x$, we need to show the eigenvalues and eigenvectors of $T_{\lambda \infty}$.
\begin{lemma}\label{lemma32}
\begin{itemize}
\item[(i)] The eigenvalues of $T_{\lambda \infty}$ are
\begin{eqnarray}
\label{4210}
z_{+} = \frac{\lambda +\lambda ^{-1}+\sqrt{\lambda^2 + \lambda^{-2}}}{\sqrt{2}},\quad z_{-}=\frac{\lambda +\lambda ^{-1}-\sqrt{\lambda^2 + \lambda^{-2}}}{\sqrt{2}}. 
\end{eqnarray}
Here, $z_+$ and $z_-$ can be possibly the same.
\item[(ii)]The eigenvector of $T_{\lambda \infty}$ associated with $z_{\pm}$ is a constant multiple of  
\begin{align}
\chi_{\pm} = 
\begin{bmatrix}
   1 \\
   \frac{-\lambda +\lambda^{-1}\pm \sqrt{\lambda^2 + \lambda^{-2}}}{\sqrt{2}}
\end{bmatrix}. \nonumber
\end{align}
When $z_+$ and $z_-$ are the same, $\chi_+$ and $\chi_-$ are also the same, and the dimension of the eigenspace is one.
\end{itemize}

\begin{proof}
Solving the quadratic equation
$
\det(\mu -T_{\lambda \infty})
= \mu^2 -\big(\frac{\sqrt{2}}{\lambda} + \sqrt{2}\lambda \big) \mu +1 =0
$, we obtain ($\mathrm{i}$). One can check ($\mathrm{ii}$) by direct calculation.
\end{proof}
\end{lemma}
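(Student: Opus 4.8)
The plan is to treat this as a routine $2\times 2$ eigenvalue problem, exploiting the fact that $T_{\lambda\infty}$ has a prescribed determinant. First I would record its trace and determinant: since $T_{\lambda\infty}\in SL(2,\mathbb{C})$ by construction we have $\det T_{\lambda\infty}=1$, while $\operatorname{tr}T_{\lambda\infty}=\sqrt{2}\lambda+\sqrt{2}\lambda^{-1}=\sqrt{2}(\lambda+\lambda^{-1})$. The characteristic polynomial is therefore $\mu^2-\sqrt{2}(\lambda+\lambda^{-1})\mu+1$, which is exactly the quadratic displayed in the author's proof.

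For part (i), I would apply the quadratic formula and simplify the discriminant. The key algebraic identity is
\begin{align*}
2(\lambda+\lambda^{-1})^2-4 = 2(\lambda^2+2+\lambda^{-2})-4 = 2(\lambda^2+\lambda^{-2}),
\end{align*}
so that the square root of the discriminant equals $\sqrt{2}\,\sqrt{\lambda^2+\lambda^{-2}}$. (The branch chosen for this square root is immaterial, since the $\pm$ in the quadratic formula runs over both signs and merely relabels $z_+$ and $z_-$.) Dividing through by $2$ collapses the expression to $z_\pm=\bigl(\lambda+\lambda^{-1}\pm\sqrt{\lambda^2+\lambda^{-2}}\bigr)/\sqrt{2}$, which is assertion (i). The two roots coincide precisely when the discriminant $2(\lambda^2+\lambda^{-2})$ vanishes, i.e. when $\lambda^4=-1$; this is the degenerate case flagged in the statement.

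For part (ii), I would solve the homogeneous system $(T_{\lambda\infty}-z_\pm I)\chi_\pm=0$. Writing $\chi_\pm=\begin{bmatrix} a \\ b \end{bmatrix}$, the first row reads $(\sqrt{2}\lambda-z_\pm)a+b=0$. Normalizing $a=1$ gives $b=z_\pm-\sqrt{2}\lambda$, and substituting the formula for $z_\pm$ yields
\begin{align*}
z_\pm-\sqrt{2}\lambda = \frac{\lambda+\lambda^{-1}\pm\sqrt{\lambda^2+\lambda^{-2}}-2\lambda}{\sqrt{2}} = \frac{-\lambda+\lambda^{-1}\pm\sqrt{\lambda^2+\lambda^{-2}}}{\sqrt{2}},
\end{align*}
which is exactly the second entry of $\chi_\pm$. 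Since the off-diagonal entry of $T_{\lambda\infty}-z_\pm I$ equals $1\neq 0$, the first row alone pins down the eigenline, so the eigenvector is determined up to a scalar multiple.

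There is no substantive obstacle here; the only point requiring a moment's care is the degenerate case $z_+=z_-$. Here I would observe that $T_{\lambda\infty}$ is never a scalar matrix, because its off-diagonal entries equal $1\neq 0$. Consequently a repeated eigenvalue cannot have geometric multiplicity $2$ (that would force $T_{\lambda\infty}$ to be a scalar matrix), so the eigenspace is one-dimensional; correspondingly the formulas for $\chi_+$ and $\chi_-$ collapse to the same vector. This confirms the final sentence of the statement.
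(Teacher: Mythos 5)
Your proposal is correct and follows essentially the same route as the paper: derive the characteristic polynomial $\mu^2-\sqrt{2}(\lambda+\lambda^{-1})\mu+1=0$, apply the quadratic formula, and verify the eigenvectors directly. Your extra details --- noting that $\sqrt{2(\lambda^2+\lambda^{-2})}=\sqrt{2}\,\sqrt{\lambda^2+\lambda^{-2}}$ regardless of branch since the $\pm$ relabels the roots, and that the nonzero off-diagonal entry rules out geometric multiplicity $2$ in the degenerate case $\lambda^4=-1$ --- simply make explicit what the paper's ``direct calculation'' leaves to the reader.
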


\begin{remark}
Noting $\det T_{\lambda \infty}=1$, we observe
\begin{equation}\label{eq4.2.1}
z_{+}=\frac{1}{z_{-}}.
\end{equation}
\end{remark}

To consider the convergence of $\|J\Psi\|$, it is important to know the value $|z_+|$. To do this, we define the following sets:
\begin{align*}
    \Sigma &= \Bigl\{\lambda =e^{i\theta} \colon \theta \in \Bigl[\frac{\pi}{4},\frac{3\pi}{4}\Bigr]\cup \Bigl[\frac{5\pi}{4},\frac{7\pi}{4}\Bigr] \Bigr\}, \\
    \Sigma_0 &=\{e^{i\frac{\pi}{4}}, e^{i\frac{3\pi}{4}}, e^{i\frac{5\pi}{4}}, e^{i\frac{7\pi}{4}} \}, \\
    \Xi_+ &=(\{z\in \mathbb{C} \colon \Re z>0 \} \cup \{it \colon t\in(-1,0)\cup (1,\infty)\} )\backslash \Sigma, \\
    \Xi_- &= (\{z\in \mathbb{C} \colon \Re z<0 \} \cup \{it \colon t\in(-\infty,-1)\cup (0,1)\}) \backslash \Sigma.
\end{align*}
Then, we have the next lemma. 
\begin{lemma}\label{lemma43}
\end{lemma}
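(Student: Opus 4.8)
I take Lemma~\ref{lemma43} to be the classification of $|z_+|$ relative to $1$: namely $|z_+|>1$ precisely when $\lambda\in\Xi_+$, $|z_+|=1$ precisely when $\lambda\in\Sigma$, and $|z_+|<1$ precisely when $\lambda\in\Xi_-$, with $z_+=z_-$ exactly on $\Sigma_0$. The plan is to reduce the whole question to the sign of a single real quantity. Write $a=\lambda+\lambda^{-1}$ and $b=\sqrt{\lambda^2+\lambda^{-2}}=\sqrt{a^2-2}$, so that by Lemma~\ref{lemma32} one has $z_\pm=(a\pm b)/\sqrt2$. Since $z_+z_-=1$ by the Remark, $|z_+|=|z_-|$ forces $|z_+|=|z_-|=1$; and a direct expansion gives $|z_\pm|^2=\tfrac12\bigl(|a|^2+|b|^2\pm2\,\Re(a\bar b)\bigr)$, hence $|z_+|^2-|z_-|^2=2\,\Re(a\bar b)$. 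Thus the sign of $\Re(a\bar b)$ decides everything, and the proof becomes the computation of this sign across the three regions.

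First I would settle the equality locus. As just noted, $\Re(a\bar b)=0$ is equivalent to $|z_+|=|z_-|=1$, which holds iff the eigenvalues of $T_{\lambda\infty}$ are a conjugate pair on the unit circle, i.e. iff the trace $z_++z_-=\sqrt2\,a$ is real and lies in $[-2,2]$. Writing $\lambda=re^{i\theta}$ gives $a=(r+r^{-1})\cos\theta+i(r-r^{-1})\sin\theta$, which is real only if $r=1$ or $\sin\theta=0$; the real-$\lambda$ branch yields $|a|\ge2$ and is excluded, while $r=1$ gives $a=2\cos\theta$, and $2\cos\theta\in[-\sqrt2,\sqrt2]$ exactly when $|\cos\theta|\le1/\sqrt2$, i.e. exactly when $\lambda\in\Sigma$. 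This identifies $\{|z_+|=1\}=\Sigma$.

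For the strict cases I would put $a=p+iq$ and $b=u+iv$ with $u=\Re b\ge0$, using the principal branch fixed in Section~3 (whose real part is nonnegative). From $b^2=a^2-2$ one reads off $uv=pq$, and substituting $v=pq/u$ whenever $u>0$ yields the clean identity $\Re(a\bar b)=pu+qv=p\,(u^2+q^2)/u$, whose sign is simply $\mathrm{sgn}(p)=\mathrm{sgn}(\Re\lambda)$. One checks that $\{u>0\}$ is exactly the two open half-planes with $\Sigma$ removed (on the imaginary axis and on $\Sigma$ one has $u=0$, while elsewhere $a^2-2$ is not a negative real), so this proves $|z_+|>1\iff\Re\lambda>0$ and $|z_+|<1\iff\Re\lambda<0$ on the half-plane parts of $\Xi_+$ and $\Xi_-$.

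It remains to treat the imaginary axis $\lambda=it$ ($t\in\mathbb{R}\setminus\{0\}$), where $u=0$ and the formula above degenerates. Here $a=i(t-t^{-1})$ and $b=i\sqrt{(t-t^{-1})^2+2}$, so $\Re(a\bar b)=(t-t^{-1})\sqrt{(t-t^{-1})^2+2}$, whose sign is that of $t-t^{-1}$; this is positive exactly for $t\in(-1,0)\cup(1,\infty)$ and negative exactly for $t\in(-\infty,-1)\cup(0,1)$, precisely the imaginary-axis portions of $\Xi_+$ and $\Xi_-$, with $t=\pm1$ (that is $\lambda=\pm i$) already absorbed into $\Sigma$. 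Combining the three paragraphs with the disjoint decomposition $\mathbb{C}\setminus\{0\}=\Xi_+\sqcup\Xi_-\sqcup\Sigma$ gives the classification, and finally $z_+=z_-$ is equivalent to $b=0$, i.e. $\lambda^2+\lambda^{-2}=0$, i.e. $\lambda^4=-1$, which singles out $\Sigma_0$. The main obstacle I anticipate is the branch-cut bookkeeping for $\sqrt{a^2-2}$ together with the degenerate loci ($\Re b=0$ on the imaginary axis and along $\Sigma$), where the identity $\Re(a\bar b)=p(u^2+q^2)/u$ fails and must be replaced by the direct evaluations above.
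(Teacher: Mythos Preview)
Your argument is correct, and it differs from the paper's proof in an instructive way. For part (i) you and the paper do the same thing. For (ii), the paper sets $z_+=e^{i\theta}$, uses $z_-=z_+^{-1}=e^{-i\theta}$ to get $\lambda+\lambda^{-1}=\sqrt{2}\cos\theta$, solves the resulting quadratic for $\lambda$, and checks it lands in $\Sigma$; your trace-in-$[-2,2]$ criterion is the same computation packaged more conceptually. For (iii), the paper takes a topological route: it identifies the branch-cut locus $\{\lambda:\lambda^2+\lambda^{-2}\in\mathbb{R}_{\le 0}\}=(i\mathbb{R}\setminus\{0\})\cup\Sigma$, observes that $z_+$ is continuous on $\mathbb{C}\setminus(i\mathbb{R}\cup\Sigma)$, and then evaluates at the test points $\lambda=\pm1$ to decide $|z_+|\gtrless1$ on each open half-plane by connectedness; the imaginary axis is handled by the same direct computation you give. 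Your approach instead computes the sign of $|z_+|^2-|z_-|^2=2\,\Re(a\bar b)$ explicitly via the identity $\Re(a\bar b)=p(u^2+q^2)/u$, which bypasses the connectedness argument entirely. Both proofs need to locate the set $\{u=0\}$, and both treat $i\mathbb{R}$ separately, so the overall architecture is parallel; the paper trades an algebraic identity for a continuity-plus-test-point step. Your route is a little more self-contained, while the paper's makes the role of the branch cut of $\sqrt{\,\cdot\,}$ more visible.
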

\begin{itemize}
\item[(i)] $z_+=z_-$ holds if and only if $\lambda \in \Sigma_0$.

\item[(ii)] $|z_+|=1$ holds if and only if $\lambda \in \Sigma$.

\item[(iii)]$|z_+|>1$ holds if and only if $\lambda \in \Xi_{+}$, while $|z_+|<1$ holds if and only if $\lambda \in \Xi_{-}$.

\end{itemize}
We postpone the proof of Lemma \ref{lemma43} to Appendix.
\begin{lemma}\label{lemma33}
\begin{itemize}
\item[(i)]When $\lambda \in \Xi_{+}$, a non-zero solution $\Psi$ of Eq.(\ref{31}) in Lemma \ref{lemma31} exists in $\mathcal{H}$  if and only if $T_{\lambda}(0)\chi_{+}$ and $\chi_{-}$ are linearly dependent. In this case, the solution $\Psi$ is unique up to a constant factor.

\item[(ii)] When $\lambda \in \Xi_-$, a non-zero solution $\Psi$ of Eq.(\ref{31}) in Lemma \ref{lemma31} exists in $\mathcal{H}$ if and only if $T_{\lambda}(0)\chi_{-}$ and $\chi_{+}$ are linearly dependent. In this case, the solution $\Psi$ is unique up to a constant factor.
\item[(iii)]When $\lambda \in \Sigma$, ${\rm ker}(U_{\omega}-\lambda)=\{0\}$.
\end{itemize}
\end{lemma}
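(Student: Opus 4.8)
The plan is to solve the recursion \eqref{0318} explicitly in the eigenbasis $\{\chi_+,\chi_-\}$ of $T_{\lambda\infty}$ furnished by Lemma~\ref{lemma32}, and then read off exactly when the resulting sequence $J\Psi$ lies in $\mathcal{H}=\ell^2$. First I would record the elementary but crucial reduction that, since $J$ is unitary, $\Psi\in\mathcal{H}$ is equivalent to $J\Psi\in\mathcal{H}$, i.e. to $\sum_{x\in\mathbb{Z}}\|(J\Psi)(x)\|^2<\infty$; moreover a nonzero solution forces $(J\Psi)(0)\neq 0$, because \eqref{0318} shows that $(J\Psi)(0)=0$ propagates to $(J\Psi)(x)=0$ for every $x$. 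Summability then splits into an independent condition at $+\infty$ and one at $-\infty$, handled via the two branches of \eqref{0318}.

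For part~(i) I fix $\lambda\in\Xi_+$, so that Lemma~\ref{lemma43} gives $|z_+|>1$ and $|z_-|=|z_+|^{-1}<1$ by \eqref{eq4.2.1}; in particular $z_+\neq z_-$, so $\{\chi_+,\chi_-\}$ is a basis of $\mathbb{C}^2$. Writing $(J\Psi)(0)=c_+\chi_++c_-\chi_-$, the lower branch gives $(J\Psi)(x)=c_+z_+^{x}\chi_++c_-z_-^{x}\chi_-$ for $x\leq -1$, so $\|(J\Psi)(x)\|$ is comparable to $\bigl(|c_+|^2|z_+|^{2x}+|c_-|^2|z_-|^{2x}\bigr)^{1/2}$; since $\sum_{x\leq -1}|z_-|^{2x}$ diverges while $\sum_{x\leq -1}|z_+|^{2x}$ converges, summability at $-\infty$ holds iff $c_-=0$, i.e. $(J\Psi)(0)$ is a multiple of $\chi_+$. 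Decomposing $T_{\lambda}(0)\chi_+=d_+\chi_++d_-\chi_-$, the upper branch then gives $(J\Psi)(x)=c_+\bigl(d_+z_+^{x-1}\chi_++d_-z_-^{x-1}\chi_-\bigr)$ for $x\geq 1$, and the same comparison shows summability at $+\infty$ holds iff $d_+=0$, that is, iff $T_{\lambda}(0)\chi_+$ is a multiple of $\chi_-$ --- precisely the linear dependence asserted in~(i). Since the only surviving freedom is the scalar $c_+$, the solution is unique up to a constant factor. Part~(ii) is identical after interchanging $z_+\leftrightarrow z_-$ and $\chi_+\leftrightarrow\chi_-$, which is legitimate because for $\lambda\in\Xi_-$ Lemma~\ref{lemma43} gives $|z_+|<1<|z_-|$.

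For part~(iii) I fix $\lambda\in\Sigma$, so $|z_+|=|z_-|=1$ by Lemma~\ref{lemma43}. When $z_+\neq z_-$ the decomposition above applies, but now both $\sum_{x\leq -1}|z_\pm|^{2x}=\sum_{x\leq -1}1$ diverge, forcing $c_+=c_-=0$, whence $(J\Psi)(0)=0$ and $\Psi=0$. The remaining case is $\lambda\in\Sigma_0$, where $z_+=z_-=:z$ with $|z|=1$, and here I expect the main technical obstacle, since $T_{\lambda\infty}$ need not be diagonalizable. I would verify directly (for instance at $\lambda=e^{i\pi/4}$ one computes $T_{\lambda\infty}=\bigl[\begin{smallmatrix}1+i&1\\1&1-i\end{smallmatrix}\bigr]$, a nontrivial Jordan block with eigenvalue $1$) that $T_{\lambda\infty}$ is a single Jordan block, so that $T_{\lambda\infty}^{x}$ has entries growing like $|x|$ times the unimodular factor $z^{x}$; consequently, for $(J\Psi)(0)\neq 0$, $\|(J\Psi)(x)\|$ stays bounded below by a positive constant as $x\to\pm\infty$ and the $\ell^2$ sum diverges. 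Either way $(J\Psi)(0)=0$, so $\ker(U_\omega-\lambda)=\{0\}$. The only delicate points throughout are the comparison constants relating $\|(J\Psi)(x)\|$ to the coefficient moduli (harmless because $\chi_+,\chi_-$ form a fixed basis whenever $z_+\neq z_-$) and the Jordan-block bookkeeping in the $\Sigma_0$ case.
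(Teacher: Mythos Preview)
Your proposal is correct and follows essentially the same strategy as the paper: expand $(J\Psi)(0)$ and $T_\lambda(0)(J\Psi)(0)$ in the eigenbasis $\{\chi_+,\chi_-\}$ of $T_{\lambda\infty}$, use Lemma~\ref{lemma43} and \eqref{eq4.2.1} to decide which geometric series converge on each half-line, and handle $\Sigma_0$ via the Jordan block structure (which the paper also does, and which you can justify directly from Lemma~\ref{lemma32}(ii) rather than by computing a single example). The only cosmetic differences are that the paper phrases the norm lower bound in part~(iii) via the Cauchy--Schwarz inequality rather than equivalence of norms on a fixed basis, and it works on the branch $x\geq 1$ there while you work on $x\leq -1$; both choices lead to the same conclusion.
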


\begin{proof}
First, we assume $\lambda \notin \Sigma_0$ so that $z_+ \neq z_-$. Let $\Psi $ be a solution of Eq.(\ref{31}) or equivalently $\Psi \in \ker (U_{\omega}-\lambda)$. Since $\{ \chi_+ ,\chi_- \}$ is a basis of $\mathbb{C}^2$, there exist scalars $a_{\pm}, b_{\pm} \in \mathbb{C}$ such that 
\begin{align}
\label{4411}
    T_{\lambda}(0)(J\Psi)(0) =a_{+} \chi_{+} +a_{-}\chi_{-}, \quad (J\Psi)(0) =b_{+} \chi_{+} +b_{-}\chi_{-}.
\end{align}
We have from Eq.(\ref{0318})
\begin{align*}
    J\Psi (x) &= T_{\lambda \infty }^{x-1}T_{\lambda}(0)(J\Psi)(0)\nonumber \\
    &=z_{+}^{x-1} a_{+} \chi_{+} +z_{-} ^{x-1}a_{-}\chi_{-} \quad (x \geq 1)
\end{align*}
and
\begin{align*}
    J\Psi (x) &= T_{\lambda \infty }^{x}(J\Psi)(0)\nonumber \\
    &=z_{+}^{x} b_{+} \chi_{+} +z_{-} ^{x}b_{-}\chi_{-} \quad (x \leq 0).
\end{align*}
Hence, the norm of $\Psi$ is calculated by 
\begin{align*}
    \| \Psi \|^2 &= \| J\Psi \|^2 = \sum_{x\in \mathbb{Z}}\| J\Psi (x) \|^2 \\
    &=\sum_{x \geq 1} \| z_{+}^{x-1} a_{+} \chi_{+} +z_{-} ^{x-1}a_{-}\chi_{-} \|^2 + \sum_{x \leq 0} \| z_{+}^{x} b_{+} \chi_{+} +z_{-} ^{x}b_{-}\chi_{-} \|^2.
\end{align*}

(i) When $\lambda \in \Xi_+$, $|z_+|>1$ and $|z_-|<1$ by Lemma \ref{lemma43} and Eq. \eqref{eq4.2.1}. Therefore, the necessary condition for $\| \Psi \|$ to converge is $a_+=0$ and $b_-=0$. This is because, if $a_+ \neq 0$, then
\begin{align*}
\|z_{+}^{x-1}a_+\chi_+ + z_{-}^{x-1}a_-\chi_- \| \geq |z_{+}|^{x-1}|a_+|\|\chi_+ \| - |z_{-}|^{x-1}|a_-|\|\chi_- \|
\end{align*}
which implies that the expression diverges as $x \to \infty$. Since the norm does not tend to zero, the corresponding series cannot be convergent. An analogous conclusion holds for $b_-$ based on the second sum. This means, from Eq.(\ref{4411}),
\begin{align*}
    T_{\lambda}(0)b_+\chi_+=a_- \chi_-.
\end{align*}
Hence, if Eq.(\ref{31}) has a non-zero solution, $T_{\lambda}(0)\chi_+$ and $\chi_-$ are linearly dependent. Furthermore, the solution $\Psi$ of Eq.(\ref{31}) is unique up to a constant factor.  Conversely if $T_{\lambda}(0)\chi_+$ and $\chi_-$ are linearly dependent, the vector $\Phi$ defined by $J\Phi (0)=\chi_+$ and Eq.(\ref{0318}) is a non-zero solution of Eq.(\ref{31}).

(ii) It can be shown by performing calculations similar to those in (i).

(iii) Assume $\lambda \in \Sigma \backslash \Sigma_0$ and $\Psi$ is a non-zero solution of Eq.\eqref{31}. 
Noting $|z_{+}|=|z_{-}|=1$, for $x \ge 1$,
\begin{align*}
    \| (J\Psi)(x)\|^2
    &= \|z_{+}^{x-1} a_{+} \chi_{+} +z_{-} ^{x-1}a_{-}\chi_{-} \|^2 \\
    &= |a_{+}|^2 \|\chi_{+} \|^2 + |a_{-}|^2 \| \chi_{-} \|^2 + 2\Re \bar{z_{+}}^{x-1} \bar{a_{+}}z_{-}^{x-1} a_{-}\langle\chi_{+},\chi_{-}\rangle \\
    &\geq \| a_{+}\chi _{+} \|^2 + \|a_{-}\chi_{-} \|^2 -2|\langle a_{+}\chi_{+},a_{-}\chi_{-} \rangle| =: \epsilon .
    \end{align*}
Here, $\epsilon$ is independent of $x$.
By Schwartz's inequality, we obtain 
\begin{align*}
  \epsilon &\geq \| a_{+}\chi _{+} \|^2 + \|a_{-}\chi_{-} \|^2 -2\| a_{+}\chi_{+}\|\cdot \|a_{-}\chi_{-}\| \\
  &=(\| a_{+}\chi_{+}\|-\|a_{-}\chi_{-}\|)^2.
\end{align*}

Since $\Psi$ is not zero, $(a_+, a_-) \neq (0,0)$. Thus, the equality holds true only when $\chi_{+}$ and $\chi_{-}$ are linearly dependent. However, this contradicts to the fact that $\chi_{+}$ and $\chi_{-}$ are the eigenvectors of different eigenvalues.
Hence, $\epsilon$ is strictly positive and 
\begin{align*}
\inf _{x \geq 1} \| J\Psi (x) \|^2 \geq \epsilon > 0.
\end{align*}
Therefore,
\begin{align*}
    \|\Psi\|^2 = \|J\Psi \|^2 &= \sum _{x\in \mathbb{Z}} \|J\Psi (x) \|^2 \geq \sum _{x=1}^{N}\|J\Psi (x) \|^2 \geq N\epsilon \rightarrow  \infty \quad (N\rightarrow \infty).
\end{align*}

Thus, $\Psi \notin \mathcal{H}$, which indicates $\ker(U_{\omega}-\lambda) =\{0\}$.

Finally, we assume $\lambda \in \Sigma_0$ and $\Psi$ is a solution of Eq.\eqref{31} and therefore $\Psi$ satisfies Eq.\eqref{0318}. Considering the Jordan decomposition, there exists $\xi \in \mathbb{C}^2$ that satisfies
\begin{align*}
    T_{\lambda \infty}\xi = z_+ \xi +\chi_+.
\end{align*}
Since the vectors $\xi$ and $\chi_+$ form a basis, $T_{\lambda}(0)(J\Psi)(0)$ can be expressed as
\begin{align*}
    T_{\lambda}(0)(J\Psi)(0)=a\chi_+ +b\xi.
\end{align*}
Then, we have
\begin{align*}
    \sum_{x \geq 1}\|(J\Psi)(x) \|^2 &= \sum_{x \geq 1}\|T_{\lambda \infty}^{x-1}T_{\lambda}(0)(J\Psi)(0) \|^2 \\
    &= \sum_{x \geq 1} \|(az_{+}^{x-1}+b(x-1)z_{+}^{x-2})\chi_{+} +z_{+}^{x-1}b\xi \|^2. 
\end{align*}
When $b \neq 0$, the above expression diverges. The reason is that
\begin{align*}
\|(az_{+}^{x-1}+b(x-1)z_{+}^{x-2})\chi_{+} + z_{+}^{x-1}b\xi \| \geq |b|(x-1)\|\chi_+ \| - |a|\|\chi_+ \| - |b|\|\xi \|
\end{align*}
which diverges as $x \to \infty$. Thus, $b$ must be zero. Moreover, since $|z_+|=1$, 
\begin{align*}
   \sum_{x \geq 1}\|(J\Psi)(x) \|^2 \ge \sum_{x=1}^N\|(J\Psi)(x) \|^2 = N |a|^2\|\chi_+\|^2 \to \infty \quad (N\to \infty).
\end{align*}
Therefore, $a=0$. This means that Eq.\eqref{31} has no non-zero solution, and therefore $\ker(U_{\omega}-\lambda) =\{0\}$.
\end{proof}

\begin{lemma}\label{lemma34}
\begin{itemize}
\item[(i)]When $\lambda \in \Xi_{+}$, $T_{\lambda}(0)\chi _{+}$ and $\chi_{-}$ are linearly dependent if and only if $\frac{\omega}{\lambda}(\frac{-\lambda +\lambda^{-1}+\sqrt{\lambda^2+\lambda^{-2}}}{\sqrt{2}})=\frac{-1\pm i}{\sqrt{2}}$.
In this case, $T_\lambda(0)\chi_+=  \mp i \left(\frac{-\lambda +\lambda^{-1}+\sqrt{\lambda^2+\lambda^{-2}}}{\sqrt{2}}\right)\chi_-$.
\item[(ii)]When $\lambda \in \Xi _{-}$, $T_{\lambda}(0)\chi _{-}$ and $\chi_{+}$ are linearly dependent if and only if $\frac{\lambda}{\omega}(\frac{-\lambda +\lambda^{-1}+\sqrt{\lambda^2 +\lambda^{-2}}}{\sqrt{2}})=\frac{1\pm i}{\sqrt{2}}$.
In this case, $T_\lambda(0)\chi_-=  \mp i \left(\frac{-\lambda +\lambda^{-1}+\sqrt{\lambda^2+\lambda^{-2}}}{\sqrt{2}}\right) \chi_+$.
\end{itemize}
\end{lemma}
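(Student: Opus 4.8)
The plan is to reduce each linear-dependence assertion to the vanishing of a single $2\times 2$ determinant, collapse that determinant to a quadratic equation with the help of one algebraic identity, and then read off the proportionality constant from a single component. Throughout I abbreviate the lower entries of the eigenvectors of Lemma~\ref{lemma32} by $\alpha_{\pm}=\frac{-\lambda+\lambda^{-1}\pm\sqrt{\lambda^2+\lambda^{-2}}}{\sqrt2}$, so $\chi_{\pm}=\begin{bmatrix}1\\\alpha_{\pm}\end{bmatrix}$. The single identity driving everything is
\[
\alpha_+\alpha_-=-1,
\]
which I would record first, either from the one-line computation $(-\lambda+\lambda^{-1})^2-(\lambda^2+\lambda^{-2})=-2$, or from $z_+z_-=\det T_{\lambda\infty}=1$ together with $\alpha_\pm=z_\pm-\sqrt2\lambda$. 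Note that this identity forces $\alpha_\pm\neq0$, so every division below is legitimate.

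For part (i), I would write out
\[
T_\lambda(0)\chi_+ = \begin{bmatrix}\tfrac{\sqrt2\lambda}{\omega}+\alpha_+\\ 1+\tfrac{\sqrt2\omega}{\lambda}\alpha_+\end{bmatrix},
\]
and use that two vectors of $\mathbb{C}^2$ are linearly dependent exactly when the determinant of the matrix having them as columns vanishes. Thus dependence of $T_\lambda(0)\chi_+$ and $\chi_-$ is equivalent to
\[
\Bigl(\tfrac{\sqrt2\lambda}{\omega}+\alpha_+\Bigr)\alpha_- - \Bigl(1+\tfrac{\sqrt2\omega}{\lambda}\alpha_+\Bigr)=0.
\]
Substituting $\alpha_+\alpha_-=-1$ and then $\alpha_-=-1/\alpha_+$, and introducing the single variable $u=\frac{\omega}{\lambda}\alpha_+$, this collapses to
\[
u^2+\sqrt2\,u+1=0,
\]
whose two roots are exactly $u=\frac{-1\pm i}{\sqrt2}$; this is the stated criterion, and since the chain of equivalences runs both ways it is an ``if and only if.'' For the explicit form, under this condition the lower entry of $T_\lambda(0)\chi_+$ equals $1+\sqrt2\,u=\pm i$; writing $T_\lambda(0)\chi_+=c\,\chi_-$ and comparing lower entries gives $c\alpha_-=\pm i$, so $c=\pm i/\alpha_-=\mp i\,\alpha_+$ via $1/\alpha_-=-\alpha_+$, recovering $T_\lambda(0)\chi_+=\mp i\,\alpha_+\chi_-$.

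Part (ii) is the same three moves with the roles of $\chi_+$ and $\chi_-$ interchanged. The determinant condition for $T_\lambda(0)\chi_-$ and $\chi_+$, simplified by $\alpha_+\alpha_-=-1$ and the substitution $v=\frac{\lambda}{\omega}\alpha_+$, becomes $v^2-\sqrt2\,v+1=0$ with roots $v=\frac{1\pm i}{\sqrt2}$, which matches the stated criterion. Reading the constant off the lower entry and using $1/\alpha_+=-\alpha_-$ yields the proportionality constant $\mp i\,\alpha_-$, i.e.\ $T_\lambda(0)\chi_-=\mp i\,\alpha_-\chi_+$ (so the displayed coefficient in (ii) should carry $\alpha_-$, consistent with the second eigenvector in Theorem~\ref{thm31}). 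Before any of this I should note that for $\lambda\in\Xi_\pm$ one has $z_+\neq z_-$, since $\Xi_\pm$ excludes $\Sigma$ and hence $\Sigma_0$, so $\{\chi_+,\chi_-\}$ is genuinely a basis by Lemma~\ref{lemma43}. I expect the only real difficulty to be bookkeeping rather than ideas: keeping the branch of $\sqrt{\cdot}$ and the correlated $\pm/\mp$ signs consistent so that both displayed components of the eigenvector agree, and recognizing at the outset that $u=\frac{\omega}{\lambda}\alpha_+$ (respectively $v=\frac{\lambda}{\omega}\alpha_+$) is the variable in which the determinant condition becomes a clean symmetric quadratic.
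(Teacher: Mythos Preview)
Your proof is correct and follows essentially the same approach as the paper: both reduce linear dependence to the vanishing of the $2\times 2$ determinant, invoke the identity $\alpha_+\alpha_-=-1$ (the paper writes it as $\alpha_+^{-1}=-\alpha_-$), and collapse the result to a quadratic in $\frac{\omega}{\lambda}\alpha_+$ (resp.\ $\frac{\lambda}{\omega}\alpha_+$), then read off the proportionality constant componentwise. You are also right that the displayed coefficient in part~(ii) of the statement should be $\mp i\alpha_-$ rather than $\mp i\alpha_+$; the paper's own proof in fact derives $\gamma'=\mp i\alpha_-$, matching your computation and the eigenvector formulas in Theorem~\ref{thm31}.
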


\begin{proof}
($\mathrm{i}$) $T_{\lambda}(0)\chi_{+}$ and $\chi_{-}$ are linearly dependent if and only if $\det \begin{bmatrix}
T_{\lambda}(0)\chi_+& \chi_- 
\end{bmatrix}
=0$. Since $\left(\frac{-\lambda +\lambda^{-1}+\sqrt{\lambda^2+\lambda^{-2}}}{\sqrt{2}}\right)^{-1}=-\left(\frac{-\lambda +\lambda^{-1}-\sqrt{\lambda^2+\lambda^{-2}}}{\sqrt{2}}\right)$, we have
\begin{gather*}
\det \begin{bmatrix}
T_{\lambda}(0)\chi_+& \chi_- 
\end{bmatrix}= \det
\begin{bmatrix}
   \frac{\sqrt{2}\lambda}{\omega} +\frac{-\lambda +\lambda^{-1}+\sqrt{\lambda^2+\lambda^{-2}}}{\sqrt{2}} & 1 \\
   1 + \frac{\sqrt{2}\omega}{\lambda}(\frac{-\lambda +\lambda^{-1}+\sqrt{\lambda^2+\lambda^{-2}}}{\sqrt{2}}) & \frac{-\lambda +\lambda^{-1}-\sqrt{\lambda^2+\lambda^{-2}}}{\sqrt{2}}
\end{bmatrix}\\
=
-2-\sqrt{2}\left( \frac{\omega}{\lambda}\Bigl(\frac{-\lambda +\lambda^{-1}+\sqrt{\lambda^2+\lambda^{-2}}}{\sqrt{2}}\Bigr)\right)-\sqrt{2}\left(\frac{\omega}{\lambda}\Bigl(\frac{-\lambda +\lambda^{-1}+\sqrt{\lambda^2+\lambda^{-2}}}{\sqrt{2}}\Bigr)\right)^{-1}.\nonumber 
\end{gather*}
Putting $\Lambda_{1} := \frac{\omega}{\lambda}\Bigl(\frac{-\lambda +\lambda^{-1}+\sqrt{\lambda^2+\lambda^{-2}}}{\sqrt{2}}\Bigr)$, $\det \begin{bmatrix}
T_{\lambda}(0)\chi_+& \chi_- 
\end{bmatrix}
=0$ is equivalent to
\begin{eqnarray*}
\Lambda_{1}+\Lambda_{1}^{-1}+\sqrt{2}=0.
\end{eqnarray*}
Solving this equation for $\Lambda_1$, we obtain
\begin{eqnarray*}
\Lambda_{1} = \frac{-1\pm i}{\sqrt{2}}
\end{eqnarray*}
which follows
\begin{eqnarray*}
\frac{\omega}{\lambda}\left(\frac{-\lambda +\lambda^{-1}+\sqrt{\lambda^2+\lambda^{-2}}}{\sqrt{2}}\right)=\frac{-1\pm i}{\sqrt{2}}.
\end{eqnarray*}

When $T_{\lambda}(0)\chi_{+}$ and $\chi_{-}$ are linearly dependent, there exists a nonzero scalar $\gamma \in \mathbb{C}$ such that 
\begin{eqnarray}
\label{320}
T_{\lambda}(0)\chi_{+} = \gamma \chi_{-}. 
\end{eqnarray}
Since $\frac{\lambda}{\omega}=\frac{-1\mp i}{\sqrt{2}}\Bigl( \frac{-\lambda +\lambda^{-1}+\sqrt{\lambda^2+\lambda^{-2}}}{\sqrt{2}} \Bigr)$,
\begin{eqnarray}
T_{\lambda}(0)\chi_{+}=
\begin{bmatrix}
   \mp i\Bigl(\frac{-\lambda +\lambda^{-1}+\sqrt{\lambda^2+\lambda^{-2}}}{\sqrt{2}}\Bigr) \\
   \pm i
\end{bmatrix}. \nonumber 
\end{eqnarray}
Hence, Eq.(\ref{320}) can be computed as
\begin{eqnarray}
\begin{bmatrix}
   \mp i\Bigl(\frac{-\lambda +\lambda^{-1}+\sqrt{\lambda^2+\lambda^{-2}}}{\sqrt{2}}\Bigr) \\
   \pm i
\end{bmatrix} \nonumber 
=\gamma 
\begin{bmatrix}
   1 \\
    \Bigl(\frac{-\lambda +\lambda^{-1}-\sqrt{\lambda^2+\lambda^{-2}}}{\sqrt{2}}
    \Bigr)
\end{bmatrix},
\end{eqnarray}
and therefore,
\begin{align*}
\gamma  = \mp i\left(\frac{-\lambda +\lambda^{-1}+\sqrt{\lambda^2+\lambda^{-2}}}{\sqrt{2}}\right).
\end{align*}

($\mathrm{ii}$) $T_{\lambda}(0)\chi_{-}$ and $\chi_{+}$ are linearly dependent if and only if $\det \begin{bmatrix}
T_{\lambda}(0)\chi_-& \chi_+ 
\end{bmatrix}
=0$. Proceeding in the same manner as above,
\begin{gather*}
\det \begin{bmatrix}
T_{\lambda}(0)\chi_-& \chi_+ 
\end{bmatrix}= \det
\begin{bmatrix}
   \frac{\sqrt{2}\lambda}{\omega} +\frac{-\lambda +\lambda^{-1}-\sqrt{\lambda^2+\lambda^{-2}}}{\sqrt{2}} & 1 \\
   1 + \frac{\sqrt{2}\omega}{\lambda}\Bigl(\frac{-\lambda +\lambda^{-1}-\sqrt{\lambda^2+\lambda^{-2}}}{\sqrt{2}}\Bigr) & \frac{-\lambda +\lambda^{-1}+\sqrt{\lambda^2+\lambda^{-2}}}{\sqrt{2}}
\end{bmatrix} \\
=-2+\sqrt{2}\left( \frac{\lambda}{\omega}\Bigl(\frac{-\lambda +\lambda^{-1}+\sqrt{\lambda^2+\lambda^{-2}}}{\sqrt{2}}\Bigr) \right)+\sqrt{2}\left(\frac{\lambda}{\omega}\Bigl(\frac{-\lambda +\lambda^{-1}+\sqrt{\lambda^2+\lambda^{-2}}}{\sqrt{2}}\Bigr)\right)^{-1}.
\end{gather*}
Putting $\Lambda_{2} := \frac{\lambda}{\omega}\Bigl(\frac{-\lambda +\lambda^{-1}+\sqrt{\lambda^2+\lambda^{-2}}}{\sqrt{2}}\Bigr)$, $\det \begin{bmatrix}
T_{\lambda}(0)\chi_-& \chi_+ 
\end{bmatrix}=0$ is equivalent to

\begin{eqnarray*}
\Lambda_{2}+\Lambda_{2}^{-1}-\sqrt{2}=0.
\end{eqnarray*}
Solving this equation for $\Lambda_2$, we obtain
\begin{eqnarray*}
\Lambda_{2} = \frac{1\pm i}{\sqrt{2}}
\end{eqnarray*}
which follows
\begin{eqnarray*}
\frac{\lambda}{\omega}\left(\frac{-\lambda +\lambda^{-1}+\sqrt{\lambda^2+\lambda^{-2}}}{\sqrt{2}}\right) = \frac{1\pm i}{\sqrt{2}}.
\end{eqnarray*}

When $T_{\lambda}(0)\chi_{-}$ and $\chi_{+}$ are linearly dependent, there exists a nonzero scalar $\gamma' \in \mathbb{C}$ such that
\begin{eqnarray}
\label{325}
T_{\lambda}(0)\chi_{-} = \gamma' \chi_{+}. 
\end{eqnarray}
Since
\begin{eqnarray}
T_{\lambda}(0)\chi_{-}=
\begin{bmatrix}
   \mp i\Bigl(\frac{-\lambda +\lambda^{-1}-\sqrt{\lambda^2+\lambda^{-2}}}{\sqrt{2}}\Bigr) \\
   \pm i
\end{bmatrix}, \nonumber 
\end{eqnarray}
Eq.(\ref{325}) can be computed as
\begin{eqnarray}
\begin{bmatrix}
   \mp i\Bigl(\frac{-\lambda +\lambda^{-1}-\sqrt{\lambda^2+\lambda^{-2}}}{\sqrt{2}}\Bigr) \\
   \pm i
\end{bmatrix} \nonumber 
=\gamma' 
\begin{bmatrix}
   1 \\
    \Bigl(\frac{-\lambda +\lambda^{-1}+\sqrt{\lambda^2+\lambda^{-2}}}{\sqrt{2}}
    \Bigr)
\end{bmatrix},
\end{eqnarray}
and therefore,
\begin{align*}
\gamma'  = \mp i\left(\frac{-\lambda +\lambda^{-1}-\sqrt{\lambda^2+\lambda^{-2}}}{\sqrt{2}}\right).
\end{align*}
\end{proof}

The following standard result on complex numbers will be used below.
\begin{lemma}\label{lemma46}
For an arbitrary complex number $a+bi \ (a,b\in \mathbb{R})$, its square root $\sqrt{a+bi}$ is express as 

\begin{eqnarray}
\sqrt{a+bi} = \left\{
\begin{array}{ll}
\sqrt{\frac{a+\sqrt{a^2+b^2}}{2}}+i\sqrt{\frac{-a+\sqrt{a^2+b^2}}{2}} &  b\geq0,\\
\sqrt{\frac{a+\sqrt{a^2+b^2}}{2}}-i\sqrt{\frac{-a+\sqrt{a^2+b^2}}{2}} & b<0.
\end{array}
\right. \nonumber 
\end{eqnarray}
\end{lemma}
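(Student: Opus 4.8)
The plan is to reduce the identity to a small real system together with a modulus constraint. First I would write the principal square root in Cartesian form as $\sqrt{a+bi}=u+iv$ with $u,v\in\mathbb{R}$. By the square-root convention fixed in Section~3, namely $\sqrt{re^{i\theta}}=r^{1/2}e^{i\theta/2}$ with $-\pi<\theta\le\pi$, the argument of $\sqrt{a+bi}$ lies in $(-\pi/2,\pi/2]$; hence its real part is nonnegative, which gives the key normalization $u\ge 0$.

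Next I would square the identity. From $(u+iv)^2=a+bi$ I obtain the two real equations $u^2-v^2=a$ and $2uv=b$. In addition, taking moduli and using $|\sqrt{z}|=|z|^{1/2}$ yields $u^2+v^2=|a+bi|=\sqrt{a^2+b^2}$. Adding and subtracting this with $u^2-v^2=a$ gives $u^2=\tfrac{a+\sqrt{a^2+b^2}}{2}$ and $v^2=\tfrac{-a+\sqrt{a^2+b^2}}{2}$; both right-hand sides are nonnegative because $\sqrt{a^2+b^2}\ge|a|$, so the square roots are real. The normalization $u\ge 0$ then forces $u=\sqrt{\tfrac{a+\sqrt{a^2+b^2}}{2}}$, exactly the real part stated in the lemma, in both cases.

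It remains to fix the sign of $v$, and this is the only place where any care is needed. Since $u\ge 0$, the relation $2uv=b$ shows that $v$ has the same sign as $b$ whenever $u>0$, giving $v=+\sqrt{\tfrac{-a+\sqrt{a^2+b^2}}{2}}$ for $b\ge 0$ and $v=-\sqrt{\tfrac{-a+\sqrt{a^2+b^2}}{2}}$ for $b<0$. The degenerate case $u=0$ occurs precisely when $a\le 0$ and $b=0$; there the convention $\theta=\pi$ (for $a<0$) gives $\sqrt{a+bi}=i\sqrt{-a}$, which is consistent with the $b\ge 0$ branch, while $a=b=0$ is immediate. Collecting the sign of $v$ according to whether $b\ge 0$ or $b<0$ then reproduces the two cases of the stated formula. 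The main (and only minor) obstacle is precisely this bookkeeping of the sign of the imaginary part together with the boundary cases $b=0$ and $u=0$; everything else is a direct algebraic manipulation.
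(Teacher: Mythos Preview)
Your proof is correct. The paper itself gives no proof of this lemma: it is introduced only with the sentence ``The following standard result on complex numbers will be used below,'' and is then applied without further justification. Your argument is the standard derivation and is complete, including the careful handling of the boundary cases $u=0$ and $b=0$ dictated by the branch convention fixed in Section~3.
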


Now, we are at the position to cralify the eigenvalues of $U$. 
Remark that $R_{\pm}(\omega)$ are the values defined in Eq.(\ref{202}).

\begin{lemma}\label{lemma47}
\begin{itemize}
\item[(i)]When $\lambda \in \Xi_{+}$, $\frac{\omega}{\lambda} \left( \frac{-\lambda + \lambda^{-1} + \sqrt{\lambda^2 + \lambda^{-2}}}{\sqrt{2}} \right) = \frac{-1 \pm i}{\sqrt{2}}$ if and only if $\lambda = R_{-}(\omega) \mp i\mathrm{sgn}(\omega) R_{+}(\omega)$.
\item[(ii)]When $\lambda \in \Xi _{-}$, $\frac{\lambda}{\omega}(\frac{-\lambda +\lambda^{-1}+\sqrt{\lambda^2 +\lambda^{-2}}}{\sqrt{2}})=\frac{1\pm i}{\sqrt{2}}$ if and only if $\lambda = -R_{-}(\omega) \pm i \mathrm{sgn}(\omega) R_{+}(\omega)$.
\end{itemize}
\end{lemma}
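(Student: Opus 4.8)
The plan is to read each ``if and only if'' as solving one explicit algebraic equation for $\lambda$, and then to select the admissible root using the region hypothesis together with the principal-branch formula of Lemma~\ref{lemma46}. I would treat (i) in detail; (ii) will be essentially identical.

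For (i), I would first clear the factor $\tfrac{1}{\sqrt 2}$ and the denominator $\lambda$ and isolate the radical, rewriting the condition as
\begin{align*}
\sqrt{\lambda^2+\lambda^{-2}}=\alpha\lambda-\lambda^{-1},\qquad \alpha:=\frac{\omega-1\pm i}{\omega}.
\end{align*}
Squaring and cancelling the common $\lambda^{-2}$ collapses this to the single scalar relation $\lambda^2(\alpha^2-1)=2\alpha$. Since $\alpha=\pm1$ would force $-1\pm i=0$ or $2\omega-1\pm i=0$, neither of which can occur for real $\omega$, I may divide and obtain the closed form
\begin{align*}
\lambda^2=\frac{2\alpha}{\alpha^2-1}=\frac{\omega(\omega-1\pm i)}{-\omega\pm i(\omega-1)}.
\end{align*}

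Next I would extract from this expression exactly the data Lemma~\ref{lemma46} needs. Rationalising the denominator gives $\Re(\lambda^2)=\frac{-\omega(\omega-1)^2}{2\omega^2-2\omega+1}$, and a modulus computation, using the polynomial identity $((\omega-1)^2+1)(2\omega^2-2\omega+1)=(\omega-1)^4+(\omega^2-\omega+1)^2$, gives $|\lambda^2|=\frac{|\omega|\sqrt{(\omega-1)^4+(\omega^2-\omega+1)^2}}{2\omega^2-2\omega+1}$; recalling $4(\omega-\tfrac12)^2+1=2(2\omega^2-2\omega+1)$, these are precisely $R_-^2-R_+^2$ and $R_-^2+R_+^2$. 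Feeding $a=\Re(\lambda^2)$ and $\sqrt{a^2+b^2}=|\lambda^2|$ into Lemma~\ref{lemma46}, the real part of the principal square root is $\sqrt{\tfrac12(|\lambda^2|+\Re(\lambda^2))}=R_-$ and its imaginary part has modulus $R_+$ and sign equal to $\mathrm{sgn}(\Im(\lambda^2))$. Since the numerator of $\Im(\lambda^2)$ is $-\omega(\omega^2-\omega+1)$ with $\omega^2-\omega+1>0$, that sign is $-\mathrm{sgn}(\omega)$, which is exactly where the factor $\mathrm{sgn}(\omega)$ in the statement comes from. Hence the two square roots of $\lambda^2$ are $\pm\bigl(R_-\mp i\,\mathrm{sgn}(\omega)R_+\bigr)$, and because $R_->0$ only the $+$ sign has positive real part; as $\lambda\in\Xi_+$ forbids $\Re\lambda<0$, this forces $\lambda=R_-\mp i\,\mathrm{sgn}(\omega)R_+$.

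The main obstacle is the reverse implication, since squaring discards the choice of branch: for $\lambda=R_-\mp i\,\mathrm{sgn}(\omega)R_+$ I must check that the principal root $\sqrt{\lambda^2+\lambda^{-2}}$ equals $\alpha\lambda-\lambda^{-1}$ and not its negative. I would settle this by noting that the paper's principal square root always has nonnegative real part, so it suffices to verify $\Re(\alpha\lambda-\lambda^{-1})\ge 0$ for this particular $\lambda$ (a finite computation), which eliminates the spurious sign and, via Lemma~\ref{lemma43}, also confirms $\lambda\in\Xi_+$. Finally, for (ii) the identical manipulation leads to the same closed form for $\lambda^2$: the condition $\frac{\lambda}{\omega}(\cdots)=\frac{1\pm i}{\sqrt 2}$ produces a numerator and denominator each carrying the same nonzero factor relative to (i), so the quotient is unchanged. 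Since $\Xi_-=-\Xi_+$, the region hypothesis now selects the opposite root $\lambda=-R_-\pm i\,\mathrm{sgn}(\omega)R_+$, which gives the stated equivalence.
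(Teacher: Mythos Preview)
Your plan is essentially the paper's proof: isolate the radical, square to a closed form for $\lambda^2$, invoke Lemma~\ref{lemma46} together with the sign constraint on $\Re\lambda$ coming from $\Xi_\pm$ to select the root, and for the converse check that the right-hand side has nonnegative real part so that it agrees with the principal branch. Be aware that this last ``finite computation'' is where the paper spends most of its effort---after substituting $R_\pm(\omega)$ it reduces to the inequality $(\omega-1)\bigl(\mathrm{sgn}(\omega)\,\omega^2\sqrt{\omega^2-2\omega+2}-\sqrt{2\omega^2-2\omega+1}\bigr)>0$---and that your appeal to Lemma~\ref{lemma43} to confirm $\lambda\in\Xi_+$ is not the right tool (that lemma is about $|z_+|$); what actually does the job is $\Re\lambda=R_-(\omega)>0$ together with the strict positivity of $\Re(\alpha\lambda-\lambda^{-1})$, which rules out $\lambda\in\Sigma$.
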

\begin{proof}
($\mathrm{i}$) We solve the following equation with respect to $\lambda$:
\begin{eqnarray*}
\frac{\omega}{\lambda} \left( \frac{-\lambda + \lambda^{-1} + \sqrt{\lambda^2 + \lambda^{-2}}}{\sqrt{2}} \right) = \frac{-1 \pm i}{\sqrt{2}}. 
\end{eqnarray*}
By computing and simplifying this equation, we get
\begin{align}
    \label{413}
    \sqrt{\lambda^2+\lambda^{-2}}=(-1\pm i)\cdot \frac{\lambda}{\omega}+\lambda-\lambda^{-1}.
\end{align}
Squaring both sides of this equation and simplifying, we obtain
\begin{align*}
\lambda^2 &= \frac{(\omega^2 - \omega \pm i\omega)(-\omega \mp i(\omega -1))}{2\omega^2 -2\omega +1} \\
&= \frac{\omega (-\omega ^2 +2\omega -1) \mp i\omega (\omega ^2 -\omega +1)}{2\omega ^2 -2\omega +1}. 
 \end{align*}
We divide the problem into two cases. We first consider the case of 
\begin{align*}
\lambda ^2= \frac{\omega (-\omega ^2 +2\omega -1) - i\omega (\omega ^2 -\omega +1)}{2\omega ^2 -2\omega +1}. 
 \end{align*}
Since $\omega^2-\omega+1$ is positive for any $\omega \in \mathbb{R} \backslash \{0\}$, we observe that
\begin{eqnarray*}
\Im \lambda^2 = \left\{
\begin{array}{ll}
{\rm positive} &  {\rm if} \quad  \omega <0,\\
{\rm negative} & {\rm if} \quad \omega >0.
\end{array}
\right. 
\end{eqnarray*}
By Lemma \ref{lemma46} and the fact that $\Re \lambda \geq 0$,
\begin{eqnarray*}
\lambda = \left\{
\begin{array}{ll}
R_{-}(\omega)-iR_{+}(\omega) &   \omega >0,\\
R_{-}(\omega)+iR_{+}(\omega) &  \omega <0.
\end{array}
\right.
\end{eqnarray*}
The obtained expression for $\lambda$ can be written in the following form using $\mathrm{sgn}(\omega)$:
\begin{align*}
    \lambda = R_{-}(\omega)-i\mathrm{sgn}(\omega)R_{+}(\omega).
\end{align*}

 We next consider the case of 
\begin{align*}
\lambda ^2= \frac{\omega (-\omega ^2 +2\omega -1) + i\omega (\omega ^2 -\omega +1)}{2\omega ^2 -2\omega +1}, 
 \end{align*}
where 
\begin{eqnarray*}
\Im \lambda^2 = \left\{
\begin{array}{ll}
{\rm positive} &  {\rm if} \quad  \omega >0,\\
{\rm negative} & {\rm if} \quad \omega <0.
\end{array}
\right.
\end{eqnarray*}
Similarly as above, we obtain
\begin{align*}
    \lambda = R_{-}(\omega)+i\mathrm{sgn}(\omega)R_{+}(\omega).
\end{align*}

We need to verify whether the obtained values of $\lambda$ realy satisfy Eq.(\ref{413}), 
because we square Eq.\eqref{413} to solve the equation. Since $\Re \sqrt{\lambda^2 +\lambda^{-2}} \ge 0$, it is enough to see that $\Re \Bigl( (-1\pm i) \cdot \frac{\lambda}{\omega}+\lambda -\lambda^{-1} \Bigr) > 0$ for $\lambda = R_-(\omega) \mp i\mathrm{sgn}(\omega) R_+(\omega)$. 
Substituting $\lambda$ yields 
\begin{align*}
    (-1 \pm i) \cdot \frac{\lambda}{\omega} +\lambda -\lambda^{-1} &= -\frac{R_-(\omega)\mp i\mathrm{sgn}(\omega)R_+(\omega)}{\omega} + \frac{\pm iR_-(\omega) + \mathrm{sgn}(\omega)R_+(\omega)}{\omega}\\
    &+R_-(\omega)\mp i\mathrm{sgn}(\omega)R_+(\omega)-\frac{R_-(\omega)\pm i\mathrm{sgn}(\omega)R_+(\omega)}{R_-(\omega)^2 +R_+(\omega)^2}.
\end{align*}
By extracting the real part from both sides, we have
\begin{align*}
    \Re \Bigl( (-1\pm i) \cdot \frac{\lambda}{\omega} +\lambda -\lambda^{-1} \Bigr) =-\frac{R_-(\omega)}{\omega}+\frac{\mathrm{sgn}(\omega)R_+(\omega)}{\omega}+R_-(\omega) -\frac{R_-(\omega)}{R_-(\omega)^2+R_+(\omega)^2}.
\end{align*}
Considering $R_-(\omega) >0$, we only need to show
\begin{align}
    -\frac{1}{\omega}+ \mathrm{sgn}(\omega)\frac{1}{\omega}\frac{R_+(\omega)}{R_-(\omega)}+1 - \frac{1}{R_-(\omega)^2+R_+(\omega)^2} > 0. \label{5236}
\end{align}
Since $(\omega -1)^4+(\omega^2-\omega+1)^2=(2\omega^2-2\omega+1)(\omega^2-2\omega+2)$,
\begin{align*}
    \frac{R_+(\omega)}{R_-(\omega)} = \frac{\mathrm{sgn}(\omega)(\omega -1)^2 +\sqrt{(2\omega ^2 -2\omega +1)(\omega ^2 -2\omega +2)}}{\omega^2 -\omega +1}.
\end{align*}
Note that $2\omega^2 - 2\omega + 1 > 0$ and $\omega^2 - 2\omega + 2 > 0$ for any $\omega \in \mathbb{R} \setminus \{0\}$. Then, we find
\begin{align*}
R_-(\omega)^2+R_+(\omega)^2 &= \mathrm{sgn}(\omega) \omega \sqrt{\frac{\omega^2-2\omega +2}{2\omega^2 -2\omega +1}},
\end{align*}
and hence,
\begin{align*}
    \frac{1}{R_-(\omega)^2+R_+(\omega)^2} =\mathrm{sgn}(\omega)\frac{1}{\omega}\sqrt{\frac{2\omega^2 -2\omega +1}{\omega^2 -2\omega +2}}.
\end{align*}
Substituting these results into Eq.(\ref{5236}) and
multiplying by $\mathrm{sgn}(\omega) \omega(\omega^2 - \omega + 1)\sqrt{\omega^2-2\omega +2}$, we obtain that Eq.\eqref{5236} is equivalent to
\begin{align}
\label{4716}
(\omega-1)(\mathrm{sgn}(\omega)\omega^2 \sqrt{\omega^2-2\omega +2} -\sqrt{2\omega^2-2\omega+1})>0.
\end{align}

When $\omega > 0$, the equation
\begin{align*}
    \left(\omega^2 \sqrt{\omega^2 - 2\omega + 2}\right)^2 - \left(\sqrt{2\omega^2 - 2\omega + 1}\right)^2 =(\omega-1)(\omega^2+1)(\omega^2-\omega+1),
\end{align*}
shows that $\omega^2 \sqrt{\omega^2-2\omega +2} -\sqrt{2\omega^2-2\omega+1}$
is positive when $\omega>1$ and is negative when $0<\omega<1$.
Therefore, Eq.(\ref{4716}) is true.
When $\omega < 0$, it is clear that Eq.(\ref{4716}) holds.

($\mathrm{ii}$) We solve the equation
\begin{eqnarray*}
\frac{\lambda}{\omega}\Bigl(\frac{-\lambda +\lambda^{-1}+\sqrt{\lambda^2 +\lambda^{-2}}}{\sqrt{2}}\Bigr)=\frac{1\pm i}{\sqrt{2}}
\end{eqnarray*}
with respect to $\lambda$.
Rewriting the equation yields the following expression:
\begin{align}
    \label{416}
    \sqrt{\lambda^2+\lambda^{-2}}=(1\pm i)\cdot \frac{\omega}{\lambda}+\lambda-\lambda^{-1}.
\end{align}
Squaring both sides of this equation and simplifying, we obtain
\begin{align}
\lambda^2 &= \frac{(\pm i\omega^2 - \omega \mp i\omega)(1-\omega \pm i\omega )}{2\omega^2 -2\omega +1} \nonumber \\
&= \frac{\omega (-\omega ^2 +2\omega -1) \mp i\omega (\omega ^2 -\omega +1)}{2\omega ^2 -2\omega +1}. \nonumber
 \end{align}
We divide the problem into two case. We first consider the case of 
\begin{align*}
\lambda ^2= \frac{\omega (-\omega ^2 +2\omega -1) - i\omega (\omega ^2 -\omega +1)}{2\omega ^2 -2\omega +1},
 \end{align*}
where
\begin{eqnarray*}
\Im \lambda^2 = \left\{
\begin{array}{ll}
{\rm positive} &  {\rm if} \quad  \omega <0,\\
{\rm negative} & {\rm if} \quad \omega >0.
\end{array}
\right. 
\end{eqnarray*}
By Lemma \ref{lemma46} and the fact that $\Re \lambda \leq 0$,
\begin{eqnarray*}
\lambda = \left\{
\begin{array}{ll}
-R_{-}(\omega)+iR_{+}(\omega) &   \omega >0,\\
-R_{-}(\omega)-iR_{+}(\omega) &  \omega <0.
\end{array}
\right.
\end{eqnarray*}
It follows from the above that $\lambda$ can be written as
\begin{align*}
    \lambda = -R_{-}(\omega)+i\mathrm{sgn}(\omega)R_{+}(\omega).
\end{align*}

We next consider the case of 
\begin{align*}
\lambda ^2= \frac{\omega (-\omega ^2 +2\omega -1) + i\omega (\omega ^2 -\omega +1)}{2\omega ^2 -2\omega +1}, 
 \end{align*}
where 
\begin{eqnarray*}
\Im \lambda^2 = \left\{
\begin{array}{ll}
{\rm positive} &  {\rm if} \quad  \omega >0,\\
{\rm negative} & {\rm if} \quad \omega <0.
\end{array}
\right.
\end{eqnarray*}
Similarly as above, we obtain
\begin{align*}
    \lambda = -R_{-}(\omega)-i\mathrm{sgn}(\omega)R_{+}(\omega).
\end{align*}
We now verify that the obtained values of $\lambda$ realy satisfy Eq.(\ref{416}). Since $\Re \sqrt{\lambda^2 +\lambda^{-2}} \ge 0$, it is enough to show that $\Re \Bigl( (1\pm i) \cdot \frac{\lambda}{\omega}+\lambda -\lambda^{-1} \Bigr) > 0$ for $\lambda = -R_-(\omega) \pm i\mathrm{sgn}(\omega) R_+(\omega)$.
Substituting $\lambda$ yields
\begin{align*}
    (1 \pm i) \cdot \frac{\omega}{\lambda} +\lambda -\lambda^{-1} =& -R_{-}(\omega)\pm i\mathrm{sgn}(\omega)R_{+}(\omega)+(-1+\omega)\cdot \frac{-R_{-}(\omega)\mp i\mathrm{sgn}(\omega)R_{+}(\omega)}{R_-(\omega)^2+R_+(\omega)^2}\\
    &+i\omega\frac{\mp R_{-}(\omega)-i\mathrm{sgn}(\omega)R_{+}(\omega)}{R_-(\omega)^2+R_+(\omega)^2}.
\end{align*}
By extracting the real part from both sides, we get
\begin{align*}
    &\Re \Bigl( (1\pm i) \cdot \frac{\lambda}{\omega} +\lambda -\lambda^{-1} \Bigr) =\\
    &-R_-(\omega)-(-1+\omega)\cdot \frac{R_-(\omega)}{R_-(\omega)^2+R_+(\omega)^2}+\mathrm{sgn}(\omega)\omega\cdot  \frac{R_+(\omega)}{R_-(\omega)^2+R_+(\omega)^2}.
\end{align*}
Since $R_-(\omega)>0$, we only need to prove
\begin{align}
    -1-(-1+\omega)\cdot \frac{1}{R_-(\omega)^2+R_+(\omega)^2}+\mathrm{sgn}(\omega)\omega \cdot \frac{R_+(\omega)}{R_-(\omega)}\cdot  \frac{1}{R_-(\omega)^2+R_+(\omega)^2} > 0. \label{4723}
\end{align}
Substituting the expressions for $\frac{R_+(\omega)}{R_-(\omega)}$ and $\frac{1}{R_-(\omega)^2 + R_+(\omega)^2}$ computed in (i) into Eq.(\ref{4723}) and multiplying by 
$\mathrm{sgn}(\omega)\omega(\omega^2 - \omega + 1)\sqrt{\omega^2-2\omega+2}$,
we obtain that Eq.\eqref{4723} is equivalent to 
\begin{align*}
(\omega-1)(\mathrm{sgn}(\omega)\omega^2 \sqrt{\omega^2-2\omega +2} -\sqrt{2\omega^2-2\omega+1})>0.
\end{align*}
The above inequality coincides with Eq.(\ref{4716}). Therefore, we have $\Re \Bigl( (1 \pm i) \cdot \frac{\lambda}{\omega} +\lambda -\lambda^{-1} \Bigr) > 0$.
\end{proof}

From the above lemmas we conclude Theorem \ref{thm31}(i), and it is found that there exist four eigenvalues.  We now proceed to determine the corresponding eigenvectors.

\begin{lemma}\label{lemma48}
For an eigenvalue $\lambda$ of $U_{\omega}$, the corresponding eigenvector is determined as follows:

\begin{itemize}
\item[(i)]If $\lambda = R_-(\omega) \pm i \mathrm{sgn}(\omega)R_+(\omega)$, then the corresponding eigenvector is
\begin{eqnarray}
\Psi(x)=
\left\{\begin{array}{ll}
\begin{bmatrix}
\mp i z_{-}^{x}\Bigl(\frac{-\lambda +\lambda^{-1}+\sqrt{\lambda^2+\lambda^{-2}}}{\sqrt{2}} \Bigr) \\
\pm i z_{-}^{x-1}
\end{bmatrix} & (x\geq 1 ), \\\\
\begin{bmatrix}
    \mp i\Bigl(\frac{-\lambda +\lambda^{-1}+\sqrt{\lambda^2+\lambda^{-2}}}{\sqrt{2}} \Bigr) \\
    \frac{-\lambda +\lambda^{-1}+\sqrt{\lambda^2+\lambda^{-2}}}{\sqrt{2}}
\end{bmatrix} & (x=0), \\\\
\begin{bmatrix}
z_{+}^{-|x+1|} \\
z_{+}^{-|x|}\Bigl(\frac{-\lambda +\lambda^{-1}+\sqrt{\lambda^2+\lambda^{-2}}}{\sqrt{2}} \Bigr)
\end{bmatrix} & (x\leq -1)
\end{array}\right. \nonumber
\end{eqnarray}
up to a scalar multiple.

\item[(ii)]If $\lambda = -R_-(\omega) \pm i \mathrm{sgn}(\omega)R_+(\omega)$, then the corresponding eigenvector is
\begin{eqnarray}
\Psi(x)=
\left\{\begin{array}{ll}
\begin{bmatrix}
\mp i z_{+}^{x} \Bigl(\frac{-\lambda +\lambda^{-1}-\sqrt{\lambda^2+\lambda^{-2}}}{\sqrt{2}} \Bigr) \\
\pm i z_{+}^{x-1}
\end{bmatrix} & (x\geq 1 ), \\\\
\begin{bmatrix}
    \mp i \Bigl(\frac{-\lambda +\lambda^{-1}-\sqrt{\lambda^2+\lambda^{-2}}}{\sqrt{2}} \Bigr) \\
    \frac{-\lambda +\lambda^{-1}-\sqrt{\lambda^2+\lambda^{-2}}}{\sqrt{2}}
\end{bmatrix} & (x=0), \\\\
\begin{bmatrix}
z_{-}^{-|x+1|} \\
z_{-}^{-|x|}\Bigl(\frac{-\lambda +\lambda^{-1}-\sqrt{\lambda^2+\lambda^{-2}}}{\sqrt{2}} \Bigr)
\end{bmatrix} & (x\leq -1)
\end{array}\right. \nonumber
\end{eqnarray}
up to a scalar multiple.
\end{itemize}
\end{lemma}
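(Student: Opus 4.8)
The plan is to build the eigenvector explicitly from the transfer-matrix machinery already assembled, rather than attacking $U_\omega\Psi=\lambda\Psi$ directly. By Lemma~\ref{lemma47} each of the two values $R_-(\omega)\pm i\,\mathrm{sgn}(\omega)R_+(\omega)$ lies in $\Xi_+$ and each of the two values $-R_-(\omega)\pm i\,\mathrm{sgn}(\omega)R_+(\omega)$ lies in $\Xi_-$, and by Lemma~\ref{lemma34} the relevant linear-dependence condition holds at these $\lambda$. Hence Lemma~\ref{lemma33} applies and already supplies both the dimension count "$\dim\ker(U_\omega-\lambda_j)=1$" and a recipe for the generator: for $\lambda\in\Xi_+$ set $(J\Psi)(0)=\chi_+$ and propagate by Eq.~\eqref{0318}, and for $\lambda\in\Xi_-$ set $(J\Psi)(0)=\chi_-$ and propagate. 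What remains is to run the propagation and read off $\Psi(x)$.

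For case (i), $\lambda\in\Xi_+$, I would use that $\chi_\pm$ are eigenvectors of $T_{\lambda\infty}$ with eigenvalues $z_\pm$ (Lemma~\ref{lemma32}). For $x\geq1$, $(J\Psi)(x)=T_{\lambda\infty}^{x-1}T_\lambda(0)\chi_+$, and Lemma~\ref{lemma34}(i) gives $T_\lambda(0)\chi_+=\gamma\chi_-$ with $\gamma=\mp i\bigl(\tfrac{-\lambda+\lambda^{-1}+\sqrt{\lambda^2+\lambda^{-2}}}{\sqrt2}\bigr)$, so $(J\Psi)(x)=\gamma z_-^{x-1}\chi_-$; for $x\leq-1$, $(J\Psi)(x)=T_{\lambda\infty}^{x}\chi_+=z_+^{x}\chi_+$. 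Then I undo $J$: since $J=L^*\oplus I$ one has $(J\Psi)(x)=(\Psi_L(x-1),\Psi_R(x))^{\top}$, so reading off the two entries of each $(J\Psi)(x)$ and shifting the index in the top entry yields $\Psi(x)$. Two elementary facts collapse the entries to the stated form: on the left half-line $x=-|x|$, which turns $z_+^{x}$ into $z_+^{-|x|}$ and $z_+^{x+1}$ into $z_+^{-|x+1|}$; and the identity $\bigl(\tfrac{-\lambda+\lambda^{-1}+\sqrt{\lambda^2+\lambda^{-2}}}{\sqrt2}\bigr)\bigl(\tfrac{-\lambda+\lambda^{-1}-\sqrt{\lambda^2+\lambda^{-2}}}{\sqrt2}\bigr)=-1$ (equivalently $(-\lambda+\lambda^{-1})^2-(\lambda^2+\lambda^{-2})=-2$, already exploited in Lemma~\ref{lemma34}), which reduces the lower entry $\gamma z_-^{x-1}\cdot\tfrac{-\lambda+\lambda^{-1}-\sqrt{\lambda^2+\lambda^{-2}}}{\sqrt2}$ to $\pm i\,z_-^{x-1}$. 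Case (ii) is the mirror image: interchange $\chi_+\leftrightarrow\chi_-$ and $z_+\leftrightarrow z_-$ and invoke Lemma~\ref{lemma34}(ii) in place of (i).

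The step demanding the most care, and the one I expect to be the genuine obstacle, is the consistent bookkeeping of three separate $\pm$ choices: the sign in $\lambda=R_-(\omega)\pm i\,\mathrm{sgn}(\omega)R_+(\omega)$, the sign in the linear-dependence condition $\tfrac{\omega}{\lambda}(\cdots)=\tfrac{-1\pm i}{\sqrt2}$ of Lemma~\ref{lemma34}(i), and the resulting sign of $\gamma$; these are coupled through Lemma~\ref{lemma47}, and a single misalignment flips $\gamma$ and hence the $\mp i$ / $\pm i$ factors in the $x\geq1$ branch. The decisive test that pins the pairing down is the junction at the origin: the branches must satisfy $(J\Psi)(1)=T_\lambda(0)(J\Psi)(0)=\gamma\chi_-$ with no stray sign, so I would fix the correspondence precisely by requiring that the written top entry at $x=0$, namely $\mp i\,\tfrac{-\lambda+\lambda^{-1}+\sqrt{\lambda^2+\lambda^{-2}}}{\sqrt2}=\Psi_L(0)$, coincide with the $x\to0$ continuation $\gamma z_-^{0}$ of the $x\geq1$ formula, and check the same at $x=-1$ against $z_+^{-1}\chi_+$. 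Once the signs are thus locked and the decay $|z_-|<1$, $|z_+|>1$ from Lemma~\ref{lemma43} is recorded (guaranteeing $\Psi\in\mathcal H$, as Lemma~\ref{lemma33} already asserts), the explicit forms follow by direct substitution.
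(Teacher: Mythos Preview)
Your proposal is correct and follows essentially the same route as the paper: fix $(J\Psi)(0)=\chi_\pm$ according to whether $\lambda\in\Xi_+$ or $\Xi_-$, propagate by Eq.~\eqref{0318} using $T_\lambda(0)\chi_\pm=\gamma\chi_\mp$ from Lemma~\ref{lemma34} and the eigenvector property of $T_{\lambda\infty}$, and then undo $J$. Your emphasis on the sign bookkeeping and the explicit use of the product identity $\bigl(\tfrac{-\lambda+\lambda^{-1}+\sqrt{\lambda^2+\lambda^{-2}}}{\sqrt2}\bigr)\bigl(\tfrac{-\lambda+\lambda^{-1}-\sqrt{\lambda^2+\lambda^{-2}}}{\sqrt2}\bigr)=-1$ are exactly what the paper uses implicitly when it writes down $J\Psi(x)$ without further comment.
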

\begin{proof}
($\mathrm{i}$) From Lemma~\ref{lemma31}, if $\lambda$ is an eigenvalue, then the corresponding eigenvector $\Psi$ satisfies Eq.(\ref{0318}) up to a scalar multiple. Furthermore, from Lemma~\ref{lemma33} and Lemma~\ref{lemma34}, we have $T_{\lambda}(0)\chi_+ = \gamma \chi_-$ and $J\Psi(0) = \chi_+$, where $\gamma =\mp i \Bigl( \frac{-\lambda +\lambda^{-1}+\sqrt{\lambda^2+\lambda^{-2}}}{\sqrt{2}} \Bigr)$. Substituting these into Eq.(\ref{0318}), we obtain
\begin{eqnarray}
J\Psi(x)=
\left\{\begin{array}{ll}
\begin{bmatrix}
\mp iz_{-}^{x-1}\Bigl(\frac{-\lambda +\lambda^{-1}+\sqrt{\lambda^2+\lambda^{-2}}}{\sqrt{2}} \Bigr) \\
\pm iz_{-}^{x-1}
\end{bmatrix} & (x\geq 1 ), \\\\
\begin{bmatrix}
    1 \\
    \frac{-\lambda +\lambda^{-1}+\sqrt{\lambda^2+\lambda^{-2}}}{\sqrt{2}}
\end{bmatrix} & (x=0), \\\\
\begin{bmatrix}
z_{+}^{-|x|} \\
z_{+}^{-|x|}\Bigl(\frac{-\lambda +\lambda^{-1}+\sqrt{\lambda^2+\lambda^{-2}}}{\sqrt{2}} \Bigr)
\end{bmatrix} & (x\leq -1),
\end{array}\right. \nonumber
\end{eqnarray}
where $z_+$ and $z_-$ are eigenvalues of $T_{\lambda \infty }$ defined in Eq.(\ref{4210}). Since the operator $J$ is given by $J=L^{*}\oplus I_{\ell^2(\mathbb{Z})}$, its adjoint is $J^*=L \oplus I_{\ell^2(\mathbb{Z})}$. Therefore, computing $\Psi(x)=(J^{-1}J\Psi)(x)$, we get
\begin{eqnarray}
\Psi(x)=
\left\{\begin{array}{ll}
\begin{bmatrix}
\mp i z_{-}^{x}\Bigl(\frac{-\lambda +\lambda^{-1}+\sqrt{\lambda^2+\lambda^{-2}}}{\sqrt{2}} \Bigr) \\
\pm i z_{-}^{x-1}
\end{bmatrix} & (x\geq 1 ), \\\\
\begin{bmatrix}
    \mp i\Bigl(\frac{-\lambda +\lambda^{-1}+\sqrt{\lambda^2+\lambda^{-2}}}{\sqrt{2}} \Bigr) \\
    \frac{-\lambda +\lambda^{-1}+\sqrt{\lambda^2+\lambda^{-2}}}{\sqrt{2}}
\end{bmatrix} & (x=0), \\\\
\begin{bmatrix}
z_{+}^{-|x+1|} \\
z_{+}^{-|x|}\Bigl(\frac{-\lambda +\lambda^{-1}+\sqrt{\lambda^2+\lambda^{-2}}}{\sqrt{2}} \Bigr)
\end{bmatrix} & (x\leq -1).
\end{array}\right. \nonumber
\end{eqnarray}

($\mathrm{ii}$) Similar to (i),  the corresponding eigenvector $\Psi$ to $\lambda$ satisfies $T_{\lambda}(0)\chi_+- = \gamma' \chi_+$ and $J\Psi(0) = \chi_-$, where $\gamma =\mp i \Bigl( \frac{-\lambda +\lambda^{-1}-\sqrt{\lambda^2+\lambda^{-2}}}{\sqrt{2}} \Bigr)$. Substituting these into Eq.(\ref{0318}), we obtain
\begin{eqnarray}
J\Psi(x)=
\left\{\begin{array}{ll}
\begin{bmatrix}
\mp i z_{+}^{x-1}\Bigl(\frac{-\lambda +\lambda^{-1}-\sqrt{\lambda^2+\lambda^{-2}}}{\sqrt{2}} \Bigr) \\
\pm iz_{+}^{x-1}
\end{bmatrix} & (x\geq 1 ), \\\\
\begin{bmatrix}
    1 \\
    \Bigl(\frac{-\lambda +\lambda^{-1}-\sqrt{\lambda^2+\lambda^{-2}}}{\sqrt{2}} \Bigr)
\end{bmatrix} & (x=0), \\\\
\begin{bmatrix}
z_{-}^{-|x|} \\
z_{-}^{-|x|}\Bigl(\frac{-\lambda +\lambda^{-1}-\sqrt{\lambda^2+\lambda^{-2}}}{\sqrt{2}} \Bigr)
\end{bmatrix} & (x\leq -1).
\end{array}\right. \nonumber
\end{eqnarray}
Therefore, we get
\begin{eqnarray}
\Psi(x)=
\left\{\begin{array}{ll}
\begin{bmatrix}
\mp i z_{+}^{x} \Bigl(\frac{-\lambda +\lambda^{-1}-\sqrt{\lambda^2+\lambda^{-2}}}{\sqrt{2}} \Bigr) \\
\pm i z_{+}^{x-1}
\end{bmatrix} & (x\geq 1 ), \\\\
\begin{bmatrix}
    \mp i \Bigl(\frac{-\lambda +\lambda^{-1}-\sqrt{\lambda^2+\lambda^{-2}}}{\sqrt{2}} \Bigr) \\
    \frac{-\lambda +\lambda^{-1}-\sqrt{\lambda^2+\lambda^{-2}}}{\sqrt{2}}
\end{bmatrix} & (x=0), \\\\
\begin{bmatrix}
z_{-}^{-|x+1|} \\
z_{-}^{-|x|}\Bigl(\frac{-\lambda +\lambda^{-1}-\sqrt{\lambda^2+\lambda^{-2}}}{\sqrt{2}} \Bigr)
\end{bmatrix} & (x\leq -1).
\end{array}\right. \nonumber
\end{eqnarray}
\end{proof}

From the above lemma, we conclude Theorem \ref{thm31}(ii).

\begin{remark}\label{remark3}
It is well-known that the spectrum and the essential spectrum of the homogeneous time-evolution operator $U_1$ are both $\Sigma=\left\{e^{i\theta}:\theta\in\left[\frac{\pi}{4},\frac{3\pi}{4}\right]\cup\left[\frac{5\pi}{4},\frac{7\pi}{4}\right]\right\}$.
Since $U_\omega - U_1$ for $\omega\in {\mathbb R}\backslash\{0,1\}$ is finite rank operator,
the essential spectrum of $U_\omega$ is also $\Sigma$.
\end{remark}
\if0
Let $\sigma_{\rm ess}(U)$ be the sets of the essential spectrum of $U$ respectively.we use the same lines as Ref.\cite{Tanaka2021}. According to Ref.\cite{Tanaka2021},
\begin{align*}
    \sigma _{\rm ess}(U(\omega))=\sigma _{\rm{ess}}(U_{0}),
\end{align*}
where $U_{0}=SC_{0}$ with $C_{0}=\frac{1}{\sqrt{2}}\begin{bmatrix}
    1 & -1 \\
    1 & 1
\end{bmatrix}$. Because $U_{0}$ is translation-invariant, $U_{0}$ can be diagonalized by the Fourier transform $\mathcal{F}$ and the Fourier transform $\mathcal{F}U_{0}\mathcal{F}^*$ is the multiplication operator $\hat{U_{0}}$ on $L^2(\mathbb{T};\frac{dk}{2\pi};\mathbb{C}^2)$ by 
$\hat{U_{0}}(k)=\begin{bmatrix}
    e^{ik} & 0 \\
    0 & e^{-ik}
\end{bmatrix}C_{0} ~ (k\in \mathbb{T})$. Since the spectrum is unitarily invariant.
\begin{align*}
\sigma_{\rm ess}(U_{0}) = \sigma_{\rm ess}(\hat{U_{0}}) =\lambda_{+}(\mathbb{T}) \cup \lambda_{-}(\mathbb{T}),
\end{align*}
where
\begin{align*}
    \lambda_{\pm}(k)=\frac{\cos k \pm i\sqrt{2-\cos ^2 k}}{\sqrt{2}}
\end{align*}
are the eigenvalues of $\hat{U_{0}(k)}$. A simple calculate leads to 
\begin{align*}
    \lambda_{\pm}(\mathbb{T})
     &= \Biggl\{ e^{i\theta} ~\Big{|}~ \theta \in \Bigl[\frac{3\pi}{4} \mp \frac{\pi}{2},\frac{5\pi}{4} \mp \frac{\pi}{2}\Bigr] \Biggr\}
\end{align*}
and
\begin{align*}
    \sigma _{\rm{ess}}(U(\omega))
     &= \Sigma .
\end{align*}
\fi

Based on Theorem \ref{thm31} and Remark \ref{remark3}, the eigenvalues and the essential spectrum of $U_\omega$ are illustrated in the complex plane as in Figure1.

\begin{figure}[H]
\begin{center}
\includegraphics[width=0.80\columnwidth]{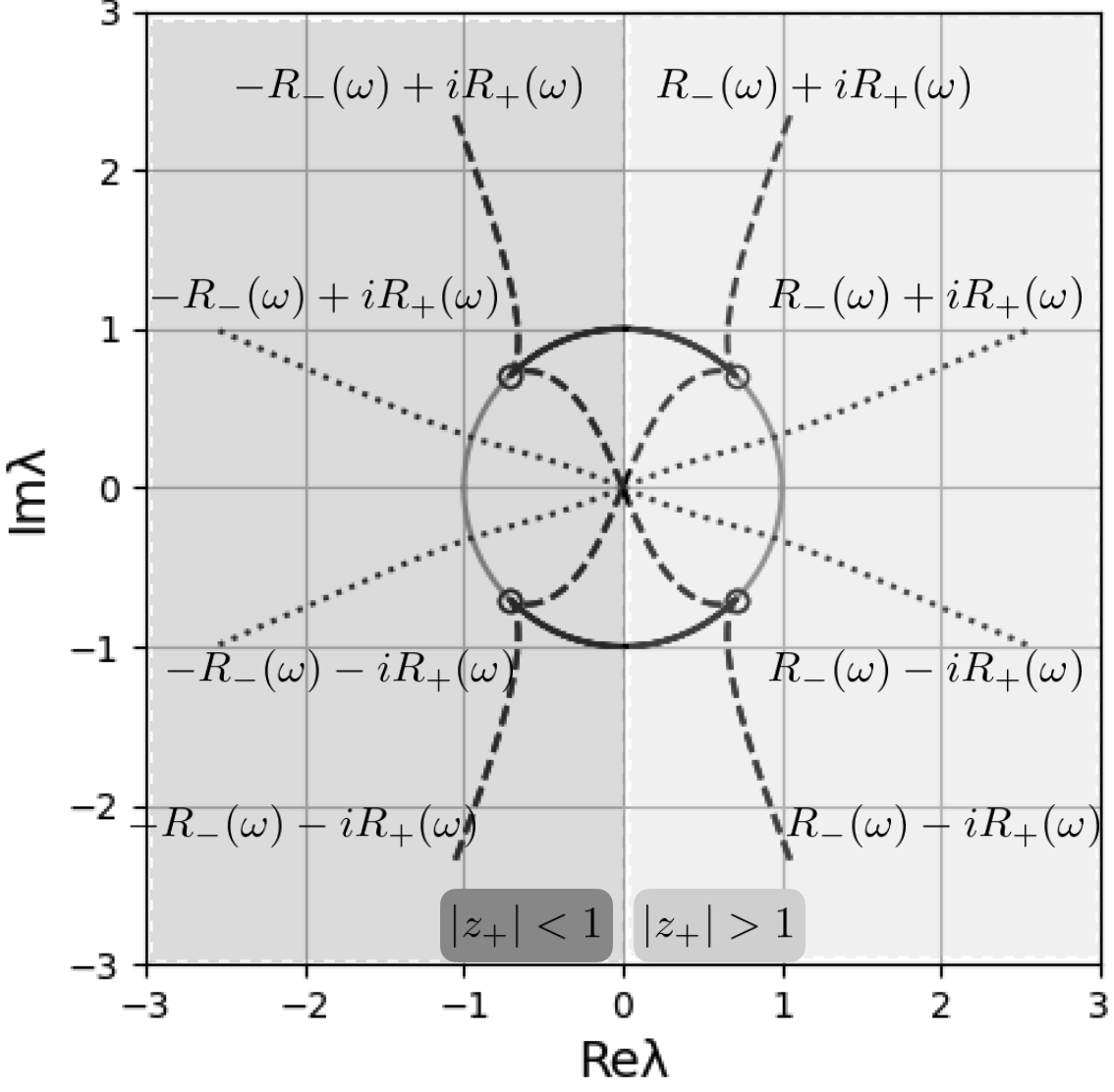}
\caption{{The eigenvalues and essential spectrum of $U_{\omega}$.
($\mathrm{i}$) Dashed line: $\omega > 0$ case;
($\mathrm{ii}$) Dotted line: $\omega < 0$ case;
($\mathrm{iii}$) Black line: the essential spectrum;
($\mathrm{iv}$) Gray line:  the unit circle.
{\label{931769}}%
}}
\end{center}
\end{figure}

\begin{remark}
The points of the intersection of the dotted line and the gray line in Fig. \ref{931769} are eigenvalues of $U_{-1}$, i.e., 
\begin{eqnarray}
\lambda_{1}=\frac{3}{\sqrt{10}}+\frac{i}{\sqrt{10}},\quad 
\lambda_{2}=-\frac{3}{\sqrt{10}}+\frac{i}{\sqrt{10}},\quad 
\lambda_{3}=-\frac{3}{\sqrt{10}}-\frac{i}{\sqrt{10}},\quad 
\lambda_{4}=\frac{3}{\sqrt{10}}-\frac{i}{\sqrt{10}}. \nonumber
\end{eqnarray}
These eigenvalues are in the unit circle, since $U_{-1}$ is unitary.
\end{remark}

\section{Acknowledgements}
I would also like to take this opportunity to thank D. Funakawa, K.Saito, Y.Tanaka, M.Seki, C.Kiumi of collaboration and advice. 

TE is supposed by the JSPS Grant-in-aid for Research Activity Start-up No.23K19004. YM is supported by JST SPRING, Grant Number JPMJSP2144 (Shinshu University).
{\label{687807}}

\section*{Appendix}
In Appendix, we give the proof of Lemma \ref{lemma43}.

($\mathrm{i}$) By definition, $z_+ = z_-$ if and only if $\lambda^2 +\lambda^{-2}=0$. Since $\lambda \neq 0$, by solving the equation $\lambda^4 = -1$, we obtain the assertion.

($\mathrm{ii}$) First, we assume $|z_+|=1$ and derive $\lambda \in \Sigma$. Let $\frac{\lambda +\lambda ^{-1} +\sqrt{\lambda ^2 +\lambda ^{-2}}}{\sqrt{2}}=e^{i\theta} \quad (\theta\in \mathbb{R})$. Rearranging this equation for $\lambda$, we get the following expression:
\begin{align*}
    \lambda^2 -\sqrt{2}\lambda \cos \theta +1=0.
\end{align*}
Here, solving the above quadratic equation for $\lambda$, we obtain 
\begin{align*}
    \lambda = \frac{\cos \theta}{\sqrt{2}}\pm i\frac{\sqrt{1+\sin ^2 \theta}}{\sqrt{2}}
\end{align*}
which is in $\Sigma$.

Conversely, we assume $\lambda \in \Sigma$. Then $\lambda$ is represented as $\lambda = e^{i \theta}$ for some $\theta \in \left[\frac{\pi}{4}, \frac{3\pi}{4} \right] \cup \left[\frac{5\pi}{4}, \frac{7\pi}{4} \right]$. Considering $\cos 2\theta \leq 0$, we can calculate as
\begin{align*}
    z_+=\sqrt{2}\cos \theta +i\sqrt{|\cos 2\theta|}.
\end{align*}
Therefore, summing the squares of the real part and the imaginary part of $z_+$, we find $|z_{+}|=1$.

($\mathrm{iii}$) We use the continuity of the functions to prove this. The square root function $\sqrt{\, \cdot\,}$ is continuous on the region $\mathbb{C} \backslash \mathbb{R}_-$, where $\mathbb{R}_-$ is the set of all negative numbers and zero. Hence, we first determine $\lambda$ such that $\lambda^2 +\lambda^{-2} \in \mathbb{R}_-$. Let $\lambda^2 +\lambda^{-2}=-t \quad (t \geq 0)$. By solving this equation, we have
\begin{align*}
    \lambda ^2 = \frac{-t\pm \sqrt{t^2 -4}}{2}.
\end{align*}
When $0 \leq t\leq 2$, $\lambda^2$ is written as
\begin{align*}
    \lambda^2 =\frac{-t\pm i\sqrt{4-t^2}}{2}.
\end{align*}
Thus, $\lambda^2 \in \mathbb{T}$ and $\Re \lambda ^2 \leq 0$, and we obtain $\lambda \in \Sigma$.
When $t>2$, $\lambda^2$ is written as
\begin{align*}
    \lambda^2 = \frac{-t\pm \sqrt{t^2-4}}{2}<0.
\end{align*}
Therefore, $\lambda \in i \mathbb{R}\backslash \{0 \}$.
Conversely, if $\lambda \in (i\mathbb{R} \backslash \{ 0 \}) \cup \Sigma$, it is easy to prove $\lambda^2 +\lambda^{-2}\in \mathbb{R}_-$.
Consequently, we obtain $\lambda^2 +\lambda^{-2} \in \mathbb{R}_-$ if and only if $\lambda \in (i\mathbb{R} \backslash \{ 0 \}) \cup \Sigma$.
Therefore, if we consider $z_+ = \frac{\lambda+\lambda^{-1} + \sqrt{\lambda^2 + \lambda^{-2}}}{\sqrt{2}}$ as a function of $\lambda$, it is continuous on the region $\mathbb{C}\backslash (i\mathbb{R} \cup \Sigma)$. Moreover, $|z_+|=1$ if and only if $\lambda \in \Sigma$.

Let $\lambda =1$, then $z_+ =1+\sqrt{2}$ is greater than 1. 
By continuity of the function $z_+$, $|z_+|$ is greater than 1 if $\lambda$ is in the open right half-plane excluding $\Sigma$. Similarly, let $\lambda =-1$, then the absolute value of $z_+ = -\sqrt{2} +1$ is less than 1. Hence, $|z_+|$ is less than 1 if $\lambda$ is in the open left half-plane excluding $\Sigma$

Now, the remaining part is $i\mathbb{R}\backslash\{0\}$. 
So, let $\lambda =ik \in i\mathbb{R}\backslash\{0\}$, then
\begin{align*}
    z_{+}= i\frac{k-\frac{1}{k}+\sqrt{k^2+\frac{1}{k^2}}}{\sqrt{2}}
\end{align*}
Thus, $|z_{+}|>1$ when $k>1$ or $-1<k<0$, and $|z_{+}|<1$ when $0<k<1$ or $k<-1$.
This completes the proof of (iii).


\bibliographystyle{unsrtnat}   
\bibliography{mybibfile_R4}

@article{QiangEtAl2024,
  author  = {Qiang, Xiaogang and Ma, Shixin and Song, Haijing},
  year    = {2024},
  month   = nov,
  title   = {Quantum Walk Computing: Theory, Implementation, and Application},
  volume  = {3},
  journal = {Intelligent Computing}
}

@article{VAndraca2012,
  author  = {Venegas-Andraca, Salvador},
  year    = {2012},
  title   = {Quantum walks: A comprehensive review},
  volume  = {11},
  journal = {Quantum Information Processing}
}

@inproceedings{AharonovEtAl2001,
  author    = {Aharonov, Dorit and Ambainis, Andris and Kempe, Julia and Vazirani, Umesh},
  title     = {Quantum walks on graphs},
  booktitle = {Proceedings of the 33rd Annual ACM Symposium on Theory of Computing (STOC '01)},
  year      = {2001},
  pages     = {50--59},
  publisher = {Association for Computing Machinery},
  address   = {New York, NY, USA}
}

@inproceedings{AmbainisEtAl2001,
  author    = {Ambainis, Andris and Bach, Eric and Nayak, Ashwin and Vishwanath, Ashvin and Watrous, John},
  title     = {One-dimensional quantum walks},
  booktitle = {Proceedings of the 33rd Annual ACM Symposium on Theory of Computing (STOC '01)},
  year      = {2001},
  pages     = {37--49},
  publisher = {Association for Computing Machinery},
  address   = {New York, NY, USA}
}

@article{Kempe2003,
  author  = {Kempe, Julia},
  year    = {2003},
  month   = mar,
  title   = {Quantum random walks: An introductory overview},
  volume  = {44},
  journal = {Contemporary Physics}
}

@article{Kendon2007,
  author    = {Kendon, Viv},
  title     = {Decoherence in quantum walks -- a review},
  year      = {2007},
  month     = dec,
  volume    = {17},
  number    = {6},
  journal   = {Mathematical Structures in Computer Science},
  pages     = {1169--1220},
  publisher = {Cambridge University Press},
  issn      = {0960-1295}
}

@article{KonnoEtAl2013,
  author  = {Konno, Norio and {L}uczak, Tomasz and Segawa, Etsuo},
  title   = {Limit measures of inhomogeneous discrete-time quantum walks in one dimension},
  year    = {2013},
  month   = jan,
  publisher = {Kluwer Academic Publishers},
  address = {USA},
  volume  = {12},
  number  = {1},
  issn    = {1570-0755},
  journal = {Quantum Information Processing},
  pages   = {33--53}
}

@article{Mochizuki2016,
  title     = {Explicit definition of $\mathcal{PT}$ symmetry for nonunitary quantum walks with gain and loss},
  author    = {Mochizuki, Ken and Kim, Dakyeong and Obuse, Hideaki},
  journal   = {Phys. Rev. A},
  volume    = {93},
  number    = {6},
  pages     = {062116},
  year      = {2016},
  month     = jun,
  publisher = {American Physical Society}
}

@article{KiumiSaito2021,
  author  = {Kiumi, Chusei and Saito, Kei},
  title   = {Eigenvalues of two-phase quantum walks with one defect in one dimension},
  year    = {2021},
  month   = may,
  volume  = {20},
  number  = {5},
  issn    = {1570-0755},
  journal = {Quantum Information Processing}
}

@article{EndoKonno2014,
  author  = {Konno, Norio and Watanabe, Takako and Endo, Takako},
  year    = {2014},
  month   = apr,
  pages   = {33--47},
  title   = {The stationary measure of a space-inhomogeneous quantum walk on the line},
  volume  = {60},
  journal = {Yokohama Mathematical Journal}
}

@article{Antoni2012,
  title     = {Trapping a particle of a quantum walk on the line},
  author    = {W\'ojcik, Antoni and {L}uczak, Tomasz and Kurzy{\'n}ski, Pawe{\l} and Grudka, Andrzej and Gdala, Tomasz and Bednarska-Bzdega, Ma{\l}gorzata},
  journal   = {Phys. Rev. A},
  volume    = {85},
  number    = {1},
  pages     = {012329},
  year      = {2012},
  month     = jan,
  publisher = {American Physical Society}
}

@article{Konno2010,
  author  = {Konno, Norio},
  title   = {Localization of an inhomogeneous discrete-time quantum walk on the line},
  year    = {2010},
  month   = jun,
  volume  = {9},
  number  = {3},
  journal = {Quantum Information Processing},
  pages   = {405--418},
  issn    = {1570-0755}
}

@article{AsaharaEtAl2021,
author = {Asahara, Keisuke and Funakawa, Daiju and Seki, Motoki and Tanaka, Yohei},
year = {2021},
month = {09},
pages = {},
title = {An index theorem for one-dimensional gapless non-unitary quantum walks},
volume = {20},
journal = {Quantum Information Processing}
}

@article{KiumiEtAl2022,
author = {Kiumi, Chusei and Saito, Kei and Tanaka, Yohei},
year = {2022},
month = {05},
pages = {},
title = {Spectral Analysis of Non-unitary Two-phase Quantum Walks in One Dimension}
}

@article{ShenviEtAl2003,
  title     = {Quantum random-walk search algorithm},
  author    = {Shenvi, Neil and Kempe, Julia and Whaley, K. Birgitta},
  journal   = {Phys. Rev. A},
  volume    = {67},
  number    = {5},
  pages     = {052307},
  year      = {2003},
  month     = may,
  publisher = {American Physical Society}
}

@inproceedings{Grover1996,
  author    = {Grover, Lov K.},
  title     = {A fast quantum mechanical algorithm for database search},
  booktitle = {Proceedings of the Twenty-Eighth Annual ACM Symposium on Theory of Computing (STOC '96)},
  year      = {1996},
  pages     = {212--219},
  address   = {New York, NY, USA},
  publisher = {Association for Computing Machinery}
}

@article{KnightEtAl2003,
  title     = {Quantum walk on the line as an interference phenomenon},
  author    = {Knight, Peter L. and Rold\'an, Eugenio and Sipe, J. E.},
  journal   = {Phys. Rev. A},
  volume    = {68},
  number    = {2},
  pages     = {020301},
  year      = {2003},
  month     = aug,
  publisher = {American Physical Society}
}

@article{InuiEtAl2004,
  title     = {Localization of two-dimensional quantum walks},
  author    = {Inui, Norio and Konishi, Yoshinao and Konno, Norio},
  journal   = {Phys. Rev. A},
  volume    = {69},
  number    = {5},
  pages     = {052323},
  year      = {2004},
  month     = may,
  publisher = {American Physical Society}
}

@article{Ambainis2004,
author = {Ambainis, Andris},
year = {2004},
month = {04},
pages = {},
title = {QUANTUM WALKS AND THEIR ALGORITHMIC APPLICATIONS},
volume = {1},
journal = {International Journal of Quantum Information}
}

@inproceedings{Szegedy2004,
  author    = {Szegedy, M.},
  title     = {Quantum speed-up of Markov chain based algorithms},
  booktitle = {45th Annual IEEE Symposium on Foundations of Computer Science},
  year      = {2004},
  pages     = {32--41}
}

@article{ChildsEtAl2009,
  title     = {Universal Computation by Quantum Walk},
  author    = {Childs, Andrew M.},
  journal   = {Phys. Rev. Lett.},
  volume    = {102},
  number    = {18},
  pages     = {180501},
  year      = {2009},
  month     = may,
  publisher = {American Physical Society}
}

@article{LovettEtAl2010,
  title     = {Universal quantum computation using the discrete-time quantum walk},
  author    = {Lovett, Neil B. and Cooper, Sally and Everitt, Matthew and Trevers, Matthew and Kendon, Viv},
  journal   = {Phys. Rev. A},
  volume    = {81},
  number    = {4},
  pages     = {042330},
  year      = {2010},
  month     = apr,
  publisher = {American Physical Society}
}

@article{ChildsEtAl2013,
  author  = {Childs, Andrew and Gosset, David and Webb, Zak},
  year    = {2013},
  month   = feb,
  pages   = {791--794},
  title   = {Universal Computation by Multiparticle Quantum Walk},
  volume  = {339},
  journal = {Science}
}

@article{MohseniEtAl2008,
  title   = {Environment-assisted quantum walks in photosynthetic energy transfer},
  author  = {Mohseni, Masoud and Rebentrost, Patrick and Lloyd, Seth and Aspuru-Guzik, Alan},
  journal = {The Journal of Chemical Physics},
  year    = {2008},
  volume  = {129},
  number  = {17},
  pages   = {174106}
}

@article{KitagawaEtAl2012,
  author  = {Kitagawa, Takuya and Broome, Matthew and Fedrizzi, Alessandro and Rudner, Mark and Berg, Erez and Kassal, Ivan and Aspuru-Guzik, Alan and Demler, Eugene and White, Andrew},
  year    = {2012},
  month   = jun,
  pages   = {882},
  title   = {Observation of Topologically Protected Bound States in Photonic Quantum Walks},
  volume  = {3},
  journal = {Nature Communications}
}

@article{PanahiyanEtAl2019,
  title     = {Simulation of the multiphase configuration and phase transitions with quantum walks utilizing a step-dependent coin},
  author    = {Panahiyan, S. and Fritzsche, S.},
  journal   = {Phys. Rev. A},
  volume    = {100},
  number    = {6},
  pages     = {062115},
  year      = {2019},
  month     = dec,
  publisher = {American Physical Society}
}

@article{Yepez2005,
  author    = {Yepez, Jeffrey},
  title     = {Relativistic Path Integral as a Lattice-based Quantum Algorithm},
  year      = {2005},
  month     = dec,
  volume    = {4},
  number    = {6},
  journal   = {Quantum Information Processing},
  pages     = {471--509}
}

@article{Meyer1997,
  title     = {Quantum mechanics of lattice gas automata: One-particle plane waves and potentials},
  author    = {Meyer, David A.},
  journal   = {Phys. Rev. E},
  volume    = {55},
  number    = {5},
  pages     = {5261--5269},
  year      = {1997},
  month     = may,
  publisher = {American Physical Society}
}

\end{document}